\newcommand\eat[1]{}
\journalname{}
\newcommand{\ie}{i.e.,\xspace}
\newcommand{\eg}{e.g.,\xspace}
\newcommand{\secref}[1]{Section~\ref{#1}}
\newcommand{\thmref}[1]{Theorem~\ref{#1}}
\newcommand{\exref}[1]{Example~\ref{#1}}
\newcommand{\set}[1]{\{#1\}}
\newcommand{\jr}{\ensuremath{\mathit{JR}}\xspace}
\newcommand{\ejr}{\ensuremath{\mathit{EJR}}\xspace}
\newcommand{\pjr}{\ensuremath{\mathit{PJR}}\xspace}
\newcommand{\strong}{strong justified representation\xspace}
\newcommand{\semi}{semi-strong justified representation\xspace}
\newcommand{\semimix}{(semi-)strong justified representation\xspace}
\newcommand{\pav}[0]{\ensuremath{\mathit{PAV}}\xspace}
\newcommand{\mav}[0]{\ensuremath{\mathit{MAV}}\xspace}
\newcommand{\av}[0]{\ensuremath{\mathit{AV}}\xspace}	
\newcommand{\sav}[0]{\ensuremath{\mathit{SAV}}\xspace}
\newcommand{\rav}[0]{\ensuremath{\mathit{RAV}}\xspace}
\newcommand{\grav}[0]{\ensuremath{\mathit{GAV}}\xspace}
\newcommand{\ccav}[0]{\ensuremath{\mathit{CCAV}}\xspace}
\newcommand{\monav}[0]{\ensuremath{\mathit{MonAV}}\xspace}
\newcommand{\calA}{{\vec{A}}}
\newcommand{\calG}{{\cal{G}}}
\newcommand{\vecw}{{\mathbf{w}}}
\newcommand{\vecu}{{\mathbf{u}}}
\newcommand{\vecx}{{\mathbf{x}}}
\newcommand{\vecy}{{\mathbf{y}}}
\newcommand{\eps}{{\varepsilon}}
\let\enumtemp=\enumerate
\def\enumerate{\enumtemp\itemsep 1pt}
\let\itemtemp=\itemize
\def\itemize{\itemtemp\itemsep 1pt}
\newcommand{\Omit}[1]{}
\begin{document}

	\title{Justified Representation in \\Approval-Based Committee Voting
	}


	\author{Haris Aziz \and Markus Brill \and Vincent Conitzer \and Edith Elkind \and Rupert Freeman \and Toby Walsh}

	\institute{%
	  H. Aziz \and 
	  T. Walsh \at
	 Data61, CSIRO and UNSW Australia,
	 	  Sydney 2052 , Australia \\
	 	  Tel.: +61-2-8306\,0490 \\
	 	  Fax: +61-2-8306\,0405 \\
	 \email{\{haris.aziz,toby.walsh\}@data61.csiro.au}
	\and 
	  M. Brill \and 
	  E. Elkind \at
  		  University of Oxford\\
  		  Oxford OX1 3QD, UK\\
		  \email{\{mbrill,elkind\}@cs.ox.ac.uk}
	\and
	  V. Conitzer \and 
 	  R. Freeman \at
		  Duke University\\Durham, NC 27708, USA\\
		  \email{\{conitzer, rupert\}@cs.duke.edu}
}
	
	\date{Received: date / Accepted: date}

\maketitle

\begin{abstract}
We consider approval-based committee voting, i.e.~the setting where each voter approves 
a subset of candidates, and these votes are then used to select a fixed-size set of winners (committee). 
We propose a natural axiom for this setting, which we call {\em justified representation (\jr)}. 
This axiom requires that if a large enough group of voters exhibits agreement
by supporting the same candidate, then at least one voter in this group
has an approved candidate in the winning committee. We show that for every list of ballots 
it is possible to select a committee that provides \jr. However, it turns out 
that several prominent approval-based voting rules may fail to output such a committee.
In particular, while Proportional Approval Voting (\pav) always outputs
a committee that provides \jr, Reweighted Approval Voting (\rav), 
a tractable approximation to \pav, does not have this property.
We then introduce a stronger version of the \jr axiom, 
which we call {\em extended justified representation (\ejr)},
and show that \pav satisfies \ejr, while other rules we consider do not;
indeed, \ejr can be used to characterize \pav within the class of weighted \pav rules. 
We also consider several other questions related to \jr and \ejr, including 
the relationship between \jr/\ejr and core stability, 
and the complexity of the associated algorithmic problems.
\end{abstract}

\keywords{Approval voting \and committee selection \and representation.\\}

\noindent
\textbf{JEL Classification}: C70 $\cdot$ D61 $\cdot$ D71

%

\section{Introduction}

The aggregation of preferences is a central problem in the field of social choice.
While the most-studied scenario is that of selecting a single candidate out of many, 
it is often the case that one needs to select a fixed-size {\em set of winners (committee)}:
this includes domains such as parliamentary elections, 
the hiring of faculty members, or (automated) agents deciding on a set of plans~\citep{LMM07a,DOS14a,EFSS14a,ELS11a,SFL15}.
The study of algorithmic complexity of voting rules that output committees 
is an active research direction 
\citep{PSZ08a,MPRZ08b,CKM10a,LuBo11d,cor-gal-spa:c:sp-width,BSU13a,SFS13a,sko-yu-fal:c:mwsc}.

In this paper we consider approval-based rules, where each voter lists
the subset of candidates that she approves of. There is a growing literature on voting 
rules that are based on approval ballots: the Handbook on Approval Voting \citep{LS10}
provides a very useful survey of pre-2010 research on this topic, and after this seminal book was published,  
various aspects of approval voting continued to attract a considerable amount of attention
\citep[see, \eg the papers of][]{CKM10a,Endr13a,Dud14a}.
One of the advantages of approval ballots is their simplicity: 
such ballots reduce the cognitive burden on voters (rather 
than providing a full ranking of the candidates, a voter only needs to decide which candidates to approve) and are also easier to communicate to the election authority.  
The most straightforward way to aggregate approvals is to have every approval for a candidate 
contribute one point to that candidate's score and select the candidates with the highest score. This rule is called \emph{Approval Voting (\av)}.
\av has many desirable properties in the single-winner case \citep{BMS06a,Endr13a},
including its ``simplicity, propensity to elect Condorcet winners (when they exist),
its robustness to manipulation and its monotonicity'' \citep[p.~viii]{Bram10a}. However, for the case 
of multiple winners, the merits of $\av$  are ``less clear'' \citep[p.~viii]{Bram10a}. 
For example, \av may fail proportional representation: if the goal is to select $k$ winners, $k>1$,
$51\%$ of the voters approve the same $k$ candidates, and the remaining
voters approve a disjoint set of $k$ candidates, then the voters in minority 
do not get any of their approved candidates selected.

As a consequence, over the years, several multi-winner rules based on approval ballots 
have been proposed \citep[see, \eg the survey by][]{Kilg10a}; we will now briefly describe the rules that 
will be considered in this paper (see Section~\ref{sec:prelim} for formal definitions). 
Under \emph{Proportional Approval Voting (\pav)}, each voter's contribution to the committee's 
total score depends on how many candidates from the voter's approval set have been elected.
In the canonical variant of this rule the marginal utility of the $\ell$-th approved candidate is 
$\frac{1}{\ell}$, i.e.~this rule is associated with the weight vector $(1,\frac12,\frac13,\dots)$;
other weight vectors can be used as well, resulting in the family of 
{\em weighted \pav rules}. A sequential variant of \pav
is known as {\em Reweighted Approval Voting (\rav)}; again, by varying the weight vector,
we obtain the family of {\em weighted \rav rules}. 
Another way to modulate the approvals is through computing a satisfaction
score for each voter based on the ratio of the number of their approved candidates appearing
in the committee and their total number of approved candidates; this idea leads to \emph{Satisfaction Approval Voting (\sav)}.
One could also use a distance-based approach: 
\emph{Minimax Approval Voting (\mav)} selects a set of $k$ candidates that minimizes the maximum
Hamming distance from the submitted ballots. 
Finally, one could adapt classic rules that provide fully proportional representation, such 
as the Chamberlin--Courant rule \citep{ChCo83a} or the Monroe rule \citep{Monr95a}, to work
with approval ballots, by using each voter's ballot as a scoring vector. 
All the rules informally described above
have a more egalitarian objective than \av. For example, Steven Brams, a proponent of \av in single-winner elections,
has argued that \sav is more suitable for equitable representation in multi-winner elections \citep{BrKi14a}. 

The relative merits of approval-based multi-winner rules 
and the complexity of winner determination under these rules have been examined in great detail in both economics 
and computer science in recent years \citep{BrFi07c,LMM07a,MPRZ08b,CKM10a,AGG+14a,BS14,MNS15}. 
On the other hand, there has been 
limited axiomatic analysis of these rules from the perspective of representation
(see, however, Section~\ref{sec:related}).

In this paper, we introduce the notion of {\em justified representation} (\jr) in approval-based voting. 
Briefly, a committee is said to provide justified representation for a given set of ballots if every large
enough group of voters with shared preferences is allocated at least one representative. A rule is said 
to satisfy justified representation if it always outputs a committee that provides justified 
representation. This concept is 
related
to the \emph{Droop proportionality criterion} \citep{Dro81}
and Dummett's \emph{solid coalition property} \citep{Dumm84a,TiRi00a,EFSS14a}, but is specific to approval-based elections. 

We show that every set of ballots admits a committee that provides justified representation; moreover, such a committee can be computed
efficiently, and checking whether a given committee provides \jr can be done in polynomial time as well. 
This shows that justified representation is a reasonable requirement. 
However, it turns out that many popular multi-winner approval-based rules fail \jr; 
in particular, this is the case for \av, \sav, \mav and the canonical variant of \rav.
On the positive side, \jr is satisfied by some of the weighted \pav rules, including the canonical
\pav rule, as well as by the weighted \rav rule associated with the weight vector $(1,0,\dots)$
and by the Monroe rule. Also, \mav satisfies \jr for a restricted domain of voters' preferences. 
We then consider a strengthening of the \jr axiom, which we call 
{\em extended justified representation (\ejr)}. This axiom captures the intuition that a very large group of voters 
with similar preferences may deserve not just one, but several representatives. \ejr turns out to be a more demanding property
than \jr: of all voting rules considered in this paper, only the canonical \pav rule satisfies \ejr. 
Thus, in particular, \ejr characterizes the canonical \pav rule within the class of weighted \pav rules.
However, we show that it is computationally hard to check whether a given committee provides \ejr. 

We also consider other strengthenings of \jr, which we call {\em \semi} and {\em \strong}; 
however, it turns out that for some inputs the requirements imposed by these axioms are impossible to satisfy.
Finally, we explore the relationship between \jr/\ejr and core stability in a non-transferable
utility game that can be associated with a multiwinner approval voting scenario. We show that,
even though \ejr may appear to be similar to core stability, it is, in fact, a strictly weaker
condition. Indeed, the core stability condition appears to be too demanding, as none
of the voting rules considered in our work is guaranteed to produce a core stable outcome,
even when the core is known to be non-empty.
We conclude the paper by showing how \jr can be used to formulate other attractive approval-based multi-winner rules, 
discussing related work, and identifying several directions for future work.

\section{Preliminaries}\label{sec:prelim}
We consider a social choice setting with a set $N=\{1,\ldots, n\}$ of voters and a set $C$ of candidates. 
Each voter $i\in N$ submits an approval ballot $A_i\subseteq C$, which represents the subset of candidates that she
approves of. We refer to the list $\calA = (A_1,\ldots, A_n)$ of approval ballots as the {\em ballot profile}. 
We will consider {\em approval-based multi-winner voting rules} that take as input a tuple $(N, C, \calA, k)$, 
where $k$ is a positive integer that satisfies $k\le |C|$, and return a subset 
$W \subseteq C$ of size $k$, which we call the {\em winning set}, or {\em committee} \citep{MaKi12a}. 
We omit $N$ and $C$ from the notation when they are clear from the context.
Several approval-based multi-winner rules are defined below.
Whenever the description of the rule does not uniquely specify a winning set, we assume that ties are broken
according to some deterministic procedure; however, most of our results do not depend on 
the tie-breaking rule.

\subsection{Approval-Based Multi-Winner Rules}

\noindent{\bf Approval Voting (AV)\ }
Under \av, the winners are the $k$ candidates that receive the largest number of approvals. Formally, 
the {\em approval score} of a candidate $c\in C$ is defined as $|\{i\mid c\in A_i\}|$, and \av outputs a set $W$ of size $k$
that maximizes $\sum_{c\in W}|\{i\mid c\in A_i\}|$.
\av has been adopted by several academic and professional societies such as 
the Institute of Electrical and Electronics Engineers (IEEE) and 
the International Joint Conference on Artificial Intelligence (IJCAI).

\smallskip

\noindent{\bf Satisfaction Approval Voting (SAV)\ }
A voter's {\em satisfaction score} is the fraction of her approved
candidates that are elected. \sav maximizes the sum of voters' satisfaction scores. Formally,
\sav  outputs a set $W \subseteq C$ of size $k$ that maximizes $\sum_{i\in N}\frac{|W\cap A_i|}{|A_i|}$.
This rule was proposed with the aim of ``representing more diverse interests'' than~$\av$ \citep{BrKi14a}.

\smallskip

\noindent{\bf Proportional Approval Voting (PAV)\ }
Under \pav, a voter is assumed to derive a utility of $1+\frac12+\frac13+ \dots +\frac1{j}$
from a committee that contains exactly $j$ of her approved candidates, 
and the goal is to maximize the sum of the voters' utilities.
Formally, the \pav-score of a set $W \subseteq C$ 
is defined as $\sum_{i\in N}r(|W\cap A_i|)$, where $r(p)=\sum_{j=1}^p\frac{1}{j}$,
and $\pav$ outputs a set $W \subseteq C$ of size $k$
with the highest \pav-score.
Though sometimes attributed to Forest Simmons, \pav was already proposed by the
Danish polymath Thorvald N. Thiele in the 19th century \citep{Thie95a}.\footnote{We 
are grateful to Xavier Mora and Svante Janson for pointing this out to us.}
\pav captures the idea of diminishing returns: an individual voter's preferences should count less the more she is satisfied. 

We can generalize the definition of \pav by using an arbitrary 
score vector in place of $(1, \frac12, \frac13, \dots)$. Specifically, for every vector\footnote{
	It is convenient to think of $\vecw$ as an infinite vector; note that for an election with $m$ candidates only the first 
	$m$ entries of $\vecw$ matter. To analyze the complexity of $\vecw$-\pav rules, one would have to place additional 
	requirements on $\vecw$; however, we do not consider algorithmic properties of such rules in this paper.}
$\vecw=(w_1, w_2, \dots)$, 
where $w_1, w_2, \dots$ are non-negative reals, 
we define a voting rule $\vecw$-\pav that operates as follows. 
Given a ballot profile $(A_1, \dots, A_n)$
and a target number of winners $k$, $\vecw$-\pav returns a set $W$ of size $k$ with the highest
$\vecw$-\pav score, defined by $\sum_{i\in N}r_\vecw(|W\cap A_i|)$, where $r_\vecw(p)=\sum_{j=1}^p w_j$.
Usually, it is required that $w_1=1$ and $w_1\ge w_2\ge\dots$. The latter constraint is 
appropriate in the context of representative democracy: it is motivated by the intuition that
once an agent already has one or more representatives in the committee, that agent 
should have less priority for further representation. 
It what follows, we will always impose the constraint $w_1=1$ (as we can always rescale the weight vector,
this is equivalent to requiring that $w_1>0$; while the case $w_1=0$ may be of interest in some applications, we omit it in order to keep the length of the paper manageable\footnote{
Generalizations of \pav with $w_1=0$ have been considered by \citet{FiPe04a} and \citet{SFL15a}. We note that such rules do not satisfy justified representation (as defined in \secref{sec:jr}).
}) 
and explicitly
indicate which of our results require that $w_1\ge w_2\ge\dots$; in particular,
for our characterization of \pav in \thmref{thm:wpav-not-ejr} this is not the case.

\smallskip

\noindent{\bf Reweighted Approval Voting (RAV)\ }
\rav converts \pav into a multi-round rule, by selecting a candidate
in each round and then reweighing the approvals for the subsequent rounds.
Specifically, \rav starts by setting $W=\emptyset$.
Then in round $j$, $j=1, \dots, k$, it computes the {\em approval weight}
of each candidate $c$ as $\sum_{i: c\in A_i}\frac{1}{1+|W\cap A_i|}$, 
selects a candidate with the highest approval weight, and adds him to $W$.
After $k$ rounds, it outputs the set $W$.
\rav has also been referred to 
as ``\emph{sequential proportional AV}'' \citep{BrKi14a},
and was used briefly in Sweden during the early 1900s.

\citet{Thie95a} proposed \rav as a tractable approximation to \pav
(see Section~\ref{sec:complex} for a discussion of the computational complexity 
of these rules and the relationship between them). We note that 
there are several other examples of voting rules that were conceived
as approximate versions of other rules, yet became viewed as legitimate voting rules
in and of themselves; two representative examples are the Simplified Dodgson rule of \citet{Tid}, 
which was designed as an approximate version of the Dodgson rule \citep[see the discussion by][]{CKKP14},
and the Greedy Monroe rule of \citet{SFS13a}, which approximates the Monroe rule~\citep{Monr95a}.

Just as for \pav, we can extend the definition of \rav to score vectors other than $(1, \frac12, \frac13, \dots)$:
every vector $\vecw=(w_1, w_2, \dots)$ 
defines a sequential 
voting rule $\vecw$-\rav, which proceeds as \rav, except that it computes 
the approval weight of a candidate $c$ in round $j$ as $\sum_{i: c\in A_i}w_{|W\cap A_i|+1}$, 
where $W$ is the winning set after the first $j-1$ rounds.
Again, we impose the constraint $w_1=1$ (note that if $w_1=0$, then $\vecw$-\rav
can pick an arbitrary candidate at the first step, which is obviously undesirable).

A particularly interesting rule in this class is $(1, 0, \dots)$-\rav: 
this rule, which we will refer to as {\em Greedy Approval Voting (\grav)},
can be seen as a variant of the SweetSpotGreedy (SSG) algorithm of \citet{LuBo11d},
and admits a very simple description: we pick candidates one by one, 
trying to `cover' as many currently `uncovered' voters as possible.
In more detail, a winning committee under this rule can be computed by the following algorithm. 
We start by setting $C'=C$, $\calA'=\calA$, and $W=\emptyset$. As long as $|W|<k$ and $\calA'$ is non-empty,
we pick a candidate $c\in C'$ that has the highest approval score with respect to $\calA'$,
and set $W:=W\cup\{c\}$, $C':=C'\setminus\{c\}$. Also, we remove from $\calA'$ all ballots $A_i$ such that
$c\in A_i$. If at some point we have $|W|<k$ and $\calA'$ is empty, we add an arbitrary set of  $k-|W|$
candidates from $C'$ to $W$ and return $W$; if this does not happen,
we terminate after having picked $k$ candidates.

We will also consider a variant of \grav, where, at each step, 
after selecting a candidate $c$,
instead of removing all voters in $\calA_c=\{i\mid c\in A_i\}$ from $\calA'$, 
we remove a subset of $\calA_c$ of size $\min\left\{\lceil\frac{n}{k}\rceil, |\calA_c|\right\}$.
This rule can be seen as an adaptation of the classic STV rule to approval ballots,
and we will refer to it as $\grav^T$ (where $T$ stands for `threshold').%
\footnote{For readability, we use the Hare quota $\lceil\frac{n}{k}\rceil$; however, all our proofs 
go through if we use the Droop quota $\lceil\frac{n}{k+1}\rceil+1$ instead.
For a discussion of differences between these two quotas, see the article of \citet{Tide95a}.}

\smallskip

\noindent{\bf Minimax Approval Voting (MAV)\ }
\mav returns a committee $W$ that minimizes the maximum \emph{Hamming distance} between $W$ and the voters' ballots;
this rule was proposed by \citet{BKS07a}.
Formally, let $d(Q, T)=|Q\setminus T|+|T\setminus Q|$ 
and define the \mav-score of a set $W\subseteq C$ as
$\max \left(d(W,A_1),\ldots, d(W,A_n)\right)$.
\mav outputs a size-$k$ set with the lowest \mav-score.

\smallskip

\noindent{\bf Chamberlin--Courant and Monroe Approval Voting (CCAV and MonAV)\ }
The Chamberlin--Courant rule \citep{ChCo83a} is usually defined
for the setting where each voter provides a full ranking of the candidates.	
Each voter $i\in N$ is associated with a scoring vector $\vecu^i=(u_1^i,\dots, u_m^i)$
whose entries are non-negative reals;
we think of $u^i_j$ as voter $i$'s satisfaction from being represented by candidate 
$c_j$. A voter's satisfaction from a committee $W$ is defined as $\max_{c_j\in W}u^i_j$,
and the rule returns a committee of size $k$ that maximizes the sum of voters' satisfactions.
For the case of approval ballots, it is natural to define the scoring vectors
by setting $u^i_j=1$ if $c_j\in A_i$ and $u^i_j=0$ otherwise; that is, a voter
is satisfied by a committee if this committee contains one of her approved candidates.
Thus, the resulting rule is equivalent to $(1,0,\dots)$-\pav (and therefore we
will not discuss it separately).

The Monroe rule \citep{Monr95a} is a modification of the Chamberlin--Courant rule
where each committee member represents roughly the same number of voters.
Just as under the Chamberlin--Courant rule, we have a scoring vector $\vecu^i=(u_1^i,\dots, u_m^i)$
for each voter $i\in N$. Given a committee $W$ of size $k$, we say that a mapping
$\pi:N\to W$ is {\em valid} if it satisfies 
$|\pi^{-1}(c)|\in\left\{\lfloor\frac{n}{k}\rfloor, \lceil\frac{n}{k}\rceil\right\}$ for each $c\in W$.
The Monroe score of a valid mapping $\pi$ is given by $\sum_{i\in N} u^i_{\pi(i)}$,
and the Monroe score of a committee $W$ is the maximum Monroe score of a 
valid mapping from $N$ to $W$. 
The Monroe rule returns a size-$k$ committee with the maximum Monroe score.
For approval ballots, we define the scoring vectors in the same manner
as for the Chamberlin--Courant Approval Voting rule; we call the resulting rule the 
Monroe Approval Voting rule (\monav). 

\smallskip

We note that for $k=1$, \av,  \pav, \rav, \grav, $\grav^T$ and \monav produce the same output
if there is a unique candidate with the highest approval score. However, such a
candidate need not be a winner under \sav or \mav.

\subsection{Computational Complexity}\label{sec:complex}
The rules listed above differ from an algorithmic perspective.
For some of these rules, namely, \av, \sav, \rav, \grav and $\grav^T$, a winning committee
can be computed in polynomial time; this is also true for $\vecw$-\rav
as long the entries of the weight vector are rational numbers that can be efficiently
computed given the number of candidates. In contrast, \pav, \mav, and \monav
are computationally hard \citep{AGG+14a,SFL15,LMM07a,PSZ08a}; 
for $\vecw$-\pav, the hardness result holds for most 
weight vectors, including $(1,0, \dots)$, (\ie it holds for \ccav).
However, both \pav and \mav admit efficient approximation algorithms 
(i.e., algorithms that output committees which are approximately optimal with respect 
to the optimization criteria of these rules) and have been analyzed 
from the perspective of parameterized complexity.
Specifically, $\vecw$-\pav admits an efficient $\left(1-\frac{1}{e}\right)$-approximation algorithm 
as long as the weight vector $\vecw$ is efficiently computable and non-increasing;
in fact, such an algorithm is provided by $\vecw$-\rav \citep{SFL15}.
For \mav, \citet{LMM07a} propose a simple $3$-approximation algorithm;
\citet{CKM10a} improve the approximation ratio to $2$ and \citet{BS14}
develop a polynomial-time approximation scheme.
\citet{MNS15} show that \mav is fixed-parameter tractable for a number of natural parameters;
\citet{EL15} obtain fixed parameter tractability results for \pav
when voters' preferences are, in some sense, single-dimensional.
There is also a number of tractability results for \ccav, and, to a lesser extent, for \monav;
we refer the reader to the work of \citet{SFL15} and references therein.

\section{Justified Representation}\label{sec:jr}

We will now define one of the main concepts of this paper.
	
\begin{definition}[Justified representation (JR)]
Given a ballot profile $\calA = (A_1, \dots, A_n)$ over a candidate set $C$ and a target committee size $k$,
we say that a set of candidates $W$ of size $|W|=k$ {\em provides justified representation 
for $(\calA, k)$} if there does not exist a set of voters $N^*\subseteq N$ with $|N^*|\ge \frac{n}{k}$ such that
$\bigcap_{i\in N^*}A_i\neq \emptyset$ and $A_i\cap W=\emptyset$ for all $i\in N^*$.
We say that an approval-based voting rule {\em satisfies justified representation (\jr)} if for every profile
$\calA = (A_1, \dots, A_n)$ and every target committee size $k$ it outputs a winning set that provides
justified representation for $(\calA, k)$.
\end{definition}

The logic behind this definition is that if $k$ candidates are to be selected, then, intuitively,
each group of $\frac{n}{k}$ voters ``deserves'' a representative. Therefore, a set of $\frac{n}{k}$ voters 
that have at least one candidate in common should not be completely unrepresented.
We refer the reader to \secref{sec:scr} for a discussion of alternative definitions.


\subsection{Existence and Computational Properties}
We start our analysis of justified representation by observing that, for every ballot profile $\calA$ and every value of $k$, 
there is a committee that provides justified representation for $(\calA, k)$, and, moreover, such a committee
can be computed efficiently given the voters' ballots. In fact, both \grav and $\grav^T$
output a committee that provides \jr.

\begin{theorem}\label{th:grav}
\grav and $\grav^T$ satisfy \jr.
\end{theorem}
\begin{proof}
We present a proof that applies to both \grav and $\grav^T$.
Suppose for the sake of contradiction that for some ballot profile $\calA=(A_1, \dots, A_n)$ and some $k>0$,
\grav (respectively, $\grav^T$) outputs a committee that does not provide justified representation for $(\calA, k)$.
Then there exists a set $N^*\subseteq N$ with $|N^*|\ge \frac{n}{k}$ such that $\bigcap_{i\in N^*} A_i\neq\emptyset$
and, when \grav (respectively, $\grav^T$) terminates, every ballot $A_i$ such that $i\in N^*$ 
is still in $\calA'$. Consider some candidate
$c\in \bigcap_{i\in N^*} A_i$. At every point in the execution of our algorithm, $c$'s approval score
is at least $|N^*|\ge \frac{n}{k}$. As $c$ was not elected, at every stage the algorithm selected
a candidate whose approval score was at least as high as that of $c$. Thus, at the end of each stage  
the algorithm removed from $\calA'$ at least $\lceil\frac{n}{k}\rceil$
ballots containing the candidate added to $W$ at that stage, 
so altogether the algorithm has removed at least $k\cdot \frac{n}{k}$ ballots from $\calA'$.
This contradicts the assumption that $\calA'$ contains at least $\frac{n}{k}$ 
ballots when the algorithm terminates.
\qed
\end{proof}

Theorem~\ref{th:grav} shows that it is easy to find a committee that provides justified representation
for a given ballot profile. It is also not too hard to check whether a given committee $W$ provides \jr. Indeed, 
while it may seem that we need to consider every subset of voters of size $\frac{n}{k}$, in fact it is sufficient
to consider the candidates one by one, and, for each candidate~$c$, 
compute $s(c)=|\{i\in N\mid c\in A_i, A_i\cap W=\emptyset\}|$;
the set $W$ fails to provide justified representation for $(\calA, k)$
if and only if there exists a candidate $c$ with $s(c) \ge \frac{n}{k}$.
We obtain the following theorem.

\begin{theorem}\label{th:poly-jr}
There exists a polynomial-time algorithm that, given a ballot profile $\calA$
over a candidate set $C$, and a committee $W$, $|W|=k$, decides
whether $W$ provides justified representation for $(\calA, k)$.
\end{theorem}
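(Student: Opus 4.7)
The plan is to show that the exponential-looking quantification over subsets $N^*$ in the definition of \jr can be replaced by a linear scan over candidates, which immediately gives a polynomial algorithm. Concretely, for each candidate $c \in C$ I would compute the quantity
\[
s(c) = |\{i \in N \mid c \in A_i \text{ and } A_i \cap W = \emptyset\}|,
\]
and declare that $W$ provides \jr for $(\calA, k)$ if and only if $s(c) < n/k$ for every $c \in C$. Computing $s(c)$ for a fixed $c$ takes $O(nk)$ time (scan each ballot and check the two conditions), so the whole test runs in time $O(nmk)$, which is polynomial in the input size.

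The only substantive step is to verify the equivalence between this per-candidate test and the original subset-based definition. In one direction, suppose some $c$ has $s(c) \geq n/k$; then the set $N^* = \{i \in N \mid c \in A_i,\ A_i \cap W = \emptyset\}$ satisfies $|N^*| \geq n/k$, has $c \in \bigcap_{i \in N^*} A_i$, and by construction $A_i \cap W = \emptyset$ for every $i \in N^*$, so $W$ fails \jr. Conversely, if $W$ fails \jr, fix a witnessing set $N^*$ and any $c \in \bigcap_{i \in N^*} A_i$; then every $i \in N^*$ contributes to $s(c)$, so $s(c) \geq |N^*| \geq n/k$.

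There is essentially no hard step here; the entire content of the theorem is the observation that, to demonstrate a \jr-violation, it suffices to exhibit a single commonly approved candidate, and so one can simply enumerate candidates rather than subsets of voters. I would present the equivalence as a short lemma-style argument and then state the algorithm and its running time as a direct corollary.
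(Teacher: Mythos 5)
Your proposal is correct and matches the paper's argument exactly: the paper also reduces the subset quantification to a per-candidate scan via the quantity $s(c)=|\{i\in N\mid c\in A_i,\ A_i\cap W=\emptyset\}|$ and checks whether any candidate attains $s(c)\ge \frac{n}{k}$. Your explicit verification of the two directions of the equivalence is a welcome addition to what the paper leaves implicit.
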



\subsection{Justified Representation and Unanimity}
A desirable property of single-winner approval-based voting rules 
is {\em unanimity}: a voting rule is unanimous if, given a ballot profile $(A_1, \dots, A_n)$
with $\cap_{i\in N} A_i\neq\emptyset$, it outputs a candidate in $\cap_{i\in N}A_i$.
This property is somewhat similar in spirit to \jr, so
the reader may expect that for $k=1$ it is equivalent to \jr. 
However, it turns out that the \jr axiom is strictly weaker than unanimity for $k=1$: 
while unanimity implies \jr, the converse is not true, as illustrated by the following example.

\begin{example}\label{ex:jr-unan}
Let $N=\{1, \dots, n\}$, $C=\{a,b_1,\dots, b_n\}$, $A_i=\{a,b_i\}$ for $i\in N$.
Consider a voting rule that for $k=1$ outputs $b_1$ on this profile and coincides with \grav
in all other cases. Clearly, this rule is not unanimous; however, it satisfies \jr,
as it is impossible to find a group of $\frac{n}{k}=n$ unrepresented voters for $(A_1,\dots, A_n)$.
\end{example}

It is not immediately clear how to define unanimity for multi-winner voting rules;
however, any reasonable definition would be equivalent to the standard definition of unanimity when $k=1$,
and therefore would be different from justified representation.

We remark that a rule can be unanimous for $k=1$ and provide \jr for all values of $k$:
this is the case, for instance, for \grav.

\section{Justified Representation under Approval-Based Rules}

We have argued that justified representation is a reasonable condition: there always exists a committee that 
provides it, and, moreover, such a committee can be computed efficiently. It is therefore natural to ask
whether prominent voting rules satisfy \jr. In this section, we will answer this question for \av, \sav, \mav, \pav, \rav, and \monav. 
We will also identify conditions on $\vecw$ that are sufficient/necessary 
for $\vecw$-\pav and $\vecw$-\rav to satisfy \jr.

In what follows, for each rule we will try to identify the range of values of $k$ for which this rule satisfies \jr. Trivially, all 
rules that we consider satisfy \jr for $k=1$.
It turns out that \av fails \jr for $k> 2$, and for $k=2$ the answer depends on the tie-breaking rule.

\begin{theorem}\label{thm:av-not-jr}
For $k=2$, \av satisfies \jr if ties are broken in favor of sets that provide \jr.
For  $k\ge 3$, \av fails \jr.
\end{theorem}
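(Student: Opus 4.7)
The plan is to split the proof into the positive result for $k=2$ and a counterexample for $k\ge 3$.

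For the negative direction, I would exhibit a small explicit profile. Take $n=k$, candidate set $C=\{c,d_1,\dots,d_k\}$, and set $A_1=\{c\}$ and $A_i=\{d_1,\dots,d_k\}$ for $i=2,\dots,k$. Then the approval score of $c$ is $1$ while each $d_j$ has score $k-1$. For $k\ge 3$ we have $k-1>1$, so \av uniquely selects $W=\{d_1,\dots,d_k\}$ regardless of tie-breaking. But $N^*=\{1\}$ has size $1=n/k$, its members share the approved candidate $c$, and $A_1\cap W=\emptyset$, so \jr fails.

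For $k=2$, suppose towards a contradiction that with JR-favoring tie-breaking \av still outputs some $W=\{w_1,w_2\}$ which fails \jr. Then there is a set $N^*\subseteq N$ with $|N^*|\ge n/2$ and a candidate $c\in\bigcap_{i\in N^*}A_i$ such that $A_i\cap W=\emptyset$ for every $i\in N^*$. The first step is to pin down all the scores tightly: since no voter in $N^*$ contributes to $s(w_1)$ or $s(w_2)$, each of these scores is at most $n-|N^*|\le n/2$, while $s(c)\ge |N^*|\ge n/2$. Since \av picked $W$ over the committee $\{c,w_1\}$, we must in fact have equality throughout: $s(c)=s(w_1)=s(w_2)=n/2$, $|N^*|=n/2$, and every voter in $N\setminus N^*$ approves both $w_1$ and $w_2$. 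In particular $c\notin W$, and $\{c,w_1\}$ is a distinct committee with the same total \av-score as $W$.

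The key remaining step is to verify that $W'=\{c,w_1\}$ provides \jr, which will contradict the tie-breaking rule. Suppose some group $N^{**}$ with $|N^{**}|\ge n/2$ shares an approved candidate but has $A_i\cap W'=\emptyset$ for all $i\in N^{**}$. Since every voter outside $N^*$ approves $w_1$, we must have $N^{**}\subseteq N^*$; but every voter in $N^*$ approves $c\in W'$, so no voter of $N^*$ can lie in $N^{**}$, forcing $N^{**}=\emptyset$, a contradiction. Hence $W'$ provides \jr and ties with $W$ under \av, so the tie-breaking rule would have selected $W'$ (or another \jr-providing tied committee) rather than $W$.

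The main obstacle is the $k=2$ case: the argument relies on extracting the exact equalities $s(c)=s(w_1)=s(w_2)=n/2$ from the hypothesized failure of \jr, and then using the fact that $c$ belongs to all voters in $N^*$ (while $w_1$ belongs to all voters outside $N^*$) to show that swapping $w_2$ for $c$ cannot reintroduce a violation. The $k\ge 3$ direction is routine once the counterexample above is written down.
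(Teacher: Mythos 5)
Your proposal is correct and follows essentially the same route as the paper: the $k\ge 3$ counterexample is identical (one isolated voter versus $k-1$ voters approving $k$ common candidates), and for $k=2$ both arguments reduce a hypothetical \jr violation to an exact score tie ($s(c)=s(w_1)=s(w_2)=n/2$) with a committee that does provide \jr, which the tie-breaking rule would then prefer. The only cosmetic difference is that you organize the $k=2$ case as a single contradiction extracting the equalities, whereas the paper runs a forward case analysis on the maximum approval score; the underlying observation is the same.
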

\begin{proof}
Suppose first that $k=2$. Fix a ballot profile $\calA$.
If every candidate is approved by fewer than $\frac{n}{2}$ voters in $\calA$,
\jr is trivially satisfied. If some candidate is approved by more than $\frac{n}{2}$ voters in $\calA$,
then \av selects some such candidate, in which case no group of $\lceil\frac{n}{2}\rceil$
voters is unrepresented, so \jr is satisfied in this case as well. It remains to consider
the case where $n=2n'$, some candidates are approved by $n'$ voters, and no candidate
is approved by more than $n'$ voters. Then \av necessarily picks at least one candidate
approved by $n'$ voters; denote this candidate by $c$. In this situation \jr can only be violated
if the $n'$ voters who do not approve $c$ all approve the same candidate (say, $c'$),
and this candidate is not elected. But the approval score of $c'$ is $n'$, and,
by our assumption, the approval score of every candidate is at most $n'$, so this
is a contradiction with our tie-breaking rule. This argument also illustrates
why the assumption on the tie-breaking rule is necessary: it can be the case
that $n'$ voters approve $c$ and $c''$, and the remaining $n'$ voters approve $c'$,
in which case the approval score of $\{c, c''\}$ is the same as that of $\{c,c'\}$.

For $k\ge 3$, we let $C=\{c_0, c_1,\dots,c_k\}$, $n=k$,
and consider the profile where the first voter approves $c_0$,
whereas each of the remaining voters approves all of $c_1,\dots, c_k$.
\jr requires $c_0$ to be selected, but \av selects $\{c_1,\dots,c_k\}$. 
\qed
\end{proof}

On the other hand, \sav and \mav fail \jr even for $k=2$.

\begin{theorem}\label{thm:sav-not-jr}
\sav and \mav  do not satisfy \jr for $k\ge 2$.
\end{theorem}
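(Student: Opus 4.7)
The plan is to prove both failures by explicit counterexample instances, parameterized by $k$, and then argue briefly that the instances can be enlarged to cover all $k\ge 2$. Since \jr is a property of a single outcome for a single profile, counterexamples suffice.

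For \sav, my strategy is to exploit the fact that \sav effectively reweights each voter's approval of a candidate $c$ by $1/|A_i|$. Thus a voter with a very long approval ballot contributes almost nothing to any candidate's \sav-weight. Concretely, for a target committee size $k\ge 2$, I would take $n=2k$ voters, let voters $1$ and $2$ both submit the same long ballot $\{b_1,\dots,b_L\}$ (for some large $L$, say $L=2k+1$), and let each of the remaining $2k-2$ voters approve a distinct singleton $\{a_j\}$. Then voters $1,2$ form a group of size $2=n/k$ with nonempty common approval set, so \jr demands that some $b_i$ be elected. But each $b_i$ has \sav-weight $2/L$, while each $a_j$ has \sav-weight $1$, and there are $2k-2\ge k$ singletons available, so \sav uniquely elects a size-$k$ subset of $\{a_j\}$'s, leaving voters $1,2$ unrepresented.

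For \mav, the strategy is the opposite: exploit the fact that a long ballot contributes a large Hamming distance whenever the committee omits one of its elements, so \mav is ``dragged'' toward the long-ballot voters and will abandon a small group with a short ballot. For $k\ge 2$, I would take $n=2k$ voters: voters $1,\dots,k$ all submit the ballot $\{a\}$ and voters $k+1,\dots,2k$ all submit $\{b_1,\dots,b_L\}$ for a sufficiently large $L$ (for instance $L=2k+2$). Voters $1,\dots,k$ form a group of size $k\ge n/k=2$ whose common approval is $\{a\}$, so \jr requires $a\in W$. The key computation is to check that the committee $W^*=\{b_1,\dots,b_k\}$ achieves a strictly smaller max-Hamming distance than any committee containing $a$: the distance from $W^*$ to any short ballot is $k+1$ and to any long ballot is $L-k$, whereas adding $a$ forces the distance to each long ballot up by $2$ (one fewer $b_i$ in, plus $a$ out). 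For $L$ large enough relative to $k$, the former max is strictly smaller, so \mav uniquely selects $W^*$, violating \jr.

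The main obstacle is the \mav case: one must pick parameters carefully so that (i) every \jr-respecting committee strictly loses to $W^*$, and (ii) this holds uniformly over $k\ge 2$. I would handle this by first exhibiting the $k=2$ example in detail (with $L=6$, say), then showing the analogous inequality $L-k > k+1$ holds for all $k\ge 2$ by choosing $L=2k+2$. The \sav case is easier because the weight ratio $1$ versus $2/L$ can be made arbitrarily skewed. In both cases, once the numerical inequalities are verified, the \jr violation follows immediately from the definition: the distinguished group has size $\ge n/k$, a nonempty common intersection, and no member is approved by any elected candidate.
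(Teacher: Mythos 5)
Your proposal is correct and takes essentially the same approach as the paper: explicit parameterized counterexamples exploiting the facts that \sav down-weights voters with long ballots while \mav is dominated by them (the paper makes exactly this observation after its proof). The specific instances differ slightly --- the paper's \mav example uses $k$ disjoint two-element ballots rather than one shared long ballot --- but your numerical verifications ($k+2$ versus $k+4$ for \mav, and $k$ versus at most $k-1+2/L$ for \sav) go through, and in both cases every optimal committee leaves the distinguished group unrepresented regardless of tie-breaking.
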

\begin{proof}
We first consider \sav. Fix $k\ge 2$, 
let $X=\{x_1,\dots,x_k, x_{k+1}\}$, $Y=\{y_1,\dots,y_k\}$, $C=X\cup Y$, 
and consider the profile $(A_1, \dots, A_k)$,
where $A_1=X$, $A_2=\{y_1,y_2\}$, $A_i=\{y_i\}$ for $i=3,\dots, k$.
\jr requires each voter to be represented, but \sav will choose $Y$:
the \sav-score of $Y$ is $k-1$, whereas the \sav-score of every committee
$W$ with $W\cap X\neq\emptyset$ is at most $k-2+\frac12+\frac{1}{k+1}<k-1$. 
Therefore, the first voter will remain unrepresented.

For \mav, we use the following construction. Fix $k\ge 2$,
let $X=\{x_1,\dots,x_k\}$, $Y=\{y_1,\dots,y_k\}$, $C=X\cup Y\cup\{z\}$,
and consider the profile $(A_1, \dots, A_{2k})$,
where $A_i=\{x_i, y_i\}$ for $i=1,\dots, k$, $A_i=\{z\}$ for $i=k+1,\dots,2k$.
Every committee of size $k$ that provides \jr for this profile contains~$z$.
However, \mav fails to select $z$. Indeed, the \mav-score of $X$ is $k+1$:
we have $d(X,A_i)=k$ for $i\le k$ and $d(X,A_i)=k+1$ for $i>k$.
Now, consider some committee $W$ with $|W|=k$, $z\in W$. We have
$A_i\cap W=\emptyset$ for some $i\le k$, so $d(W, A_i)=k+2$. 
Thus, \mav prefers $X$ to any committee that includes $z$.
\qed
\end{proof}
The constructions used in the proof of Theorem~\ref{thm:sav-not-jr} show 
that \mav and \sav may behave very differently: \sav appears to favor voters 
who approve very few candidates, whereas \mav appears to favor voters 
who approve many candidates.

Interestingly, we can show that \mav satisfies \jr if we assume that each voter
approves exactly $k$ candidates and ties are broken in favor of sets that provide \jr.
	
\begin{theorem}\label{th:mav-resr-jr}
If the target committee size is $k$, $|A_i|=k$ for all $i\in N$,
and ties are broken in favor of sets that provide \jr, then \mav satisfies \jr .
\end{theorem}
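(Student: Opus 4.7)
The plan is to argue by contradiction, exploiting the fact that under the restriction $|A_i| = k$ every size-$k$ committee has MAV-score at most $2k$, so a non-JR committee can be ``dominated'' via tie-breaking by a JR-providing one.

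First I would suppose, for contradiction, that $\mav$ outputs a committee $W$ of size $k$ that fails \jr. Then by definition there is a set $N^* \subseteq N$ with $|N^*| \ge n/k$, a candidate $c \in \bigcap_{i \in N^*} A_i$, and $A_i \cap W = \emptyset$ for every $i \in N^*$. Using $|A_i| = |W| = k$ and $A_i \cap W = \emptyset$, I would compute directly that for every $i \in N^*$,
\[
d(W, A_i) = |W \setminus A_i| + |A_i \setminus W| = k + k = 2k,
\]
so the \mav-score of $W$ is at least $2k$.

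Next I would invoke Theorem~\ref{th:grav} to obtain a committee $W^*$ of size $k$ that provides \jr for $(\calA, k)$. The key observation is that, again because $|A_i| = |W^*| = k$, we have the general bound
\[
d(W^*, A_i) = |W^* \triangle A_i| \le |W^*| + |A_i| = 2k
\]
for every $i \in N$, so the \mav-score of $W^*$ is at most $2k$.

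Finally I would close the argument: since $\mav$ selected $W$ instead of $W^*$, the \mav-score of $W$ is no larger than that of $W^*$; combined with the two bounds above this forces both scores to equal $2k$, i.e., $W$ and $W^*$ are tied. The assumed tie-breaking rule prefers the \jr-providing committee $W^*$ to $W$, contradicting the choice of $W$. The only nontrivial step is the uniform upper bound of $2k$ on MAV-scores under the assumption $|A_i|=k$; once that is in place, the rest is essentially bookkeeping. I do not see a serious obstacle beyond making sure the tie-breaking clause is invoked only when a genuine tie occurs, which is exactly what the two-sided bound on the \mav-score of $W$ guarantees.
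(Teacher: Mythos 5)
Your proof is correct and rests on essentially the same observation as the paper's: under the restriction $|A_i|=k$, the Hamming distance $d(W,A_i)$ is at most $2k$, with equality exactly when $A_i\cap W=\emptyset$, so a committee that leaves some voter wholly unrepresented has the worst possible \mav-score and can only win in an all-way tie, where the tie-breaking assumption applies. The paper packages this as a two-case analysis (if some size-$k$ set meets every ballot, \mav must pick such a set and any such set provides \jr; otherwise all scores equal $2k$), whereas you run a single contradiction using the \jr committee guaranteed by Theorem~\ref{th:grav} as the competitor --- a cosmetic difference only.
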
	
\begin{proof}
Consider a profile $\calA=(A_1, \dots, A_n)$ with $|A_i|=k$ for all $i\in N$.

Observe that if there exists a set of candidates $W$ with $|W|=k$
such that $W\cap A_i\neq\emptyset$ for all $i\in N$, then \mav will necessarily select
some such set. Indeed, for any such set $W$ we have $d(W, A_i)\le 2k-1$
for each $i\in N$, whereas if $W'\cap A_i=\emptyset$ for some set $W'$ with $|W'|=k$
and some $i\in N$, then $d(W',A_i)=2k$. Further, by definition, 
every set $W$ such that $|W|=k$ and $W\cap A_i\neq\emptyset$ for all $i\in N$ 
provides justified representation for $(\calA, k)$. 

On the other hand, if there is no $k$-element set of candidates that intersects
each $A_i$, $i\in N$, then the \mav-score of every set of size $k$ is $2k$,
and therefore \mav can pick an arbitrary size-$k$ subset. Since we assumed
that the tie-breaking rule favors sets that provide \jr, our claim follows.   
\qed
\end{proof}

While Theorem~\ref{th:mav-resr-jr} provides an example of a setting
where \mav satisfies \jr, this result is not entirely
satisfactory: first, we had to place a strong restriction on voters' preferences,
and, second, we used a tie-breaking rule that was tailored to \jr.

We will now show that \pav satisfies \jr, for all ballot profiles and irrespective of the tie-breaking rule.

\begin{theorem}\label{thm:pav-jr}
\pav satisfies \jr.
\end{theorem}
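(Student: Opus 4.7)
My plan is to argue by contradiction, using a local swap argument against a supposed PAV-optimal winning set $W$ that fails \jr. Suppose some $N^* \subseteq N$ with $|N^*| \ge n/k$ satisfies $A_i \cap W = \emptyset$ for all $i \in N^*$ and contains a commonly approved candidate $c^* \in \bigcap_{i \in N^*} A_i$. Note $c^* \notin W$, since otherwise $c^* \in W \cap A_i$ for $i \in N^*$. I will show that replacing some $c \in W$ by $c^*$ strictly increases the PAV-score, contradicting optimality of $W$.

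For the gain side, the key observation is that each $i \in N^*$ currently has $|W \cap A_i| = 0$, so adding $c^*$ raises voter $i$'s contribution by $r(1) - r(0) = 1$. Moreover, since $c \in W$ is not approved by any $i \in N^*$ (their ballots are disjoint from $W$), removing $c$ costs nothing to voters in $N^*$. Hence the total gain from the swap, for any choice of $c \in W$, is at least $|N^*|$.

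For the loss side, I will use a double-counting argument to find a good candidate $c$ to drop. Removing $c \in W$ costs voter $i$ exactly $1/|W \cap A_i|$ if $c \in A_i$ and $0$ otherwise. Summing over all $c \in W$,
\[
\sum_{c \in W} \sum_{i : c \in A_i} \frac{1}{|W \cap A_i|}
\;=\; \sum_{i \in N} \sum_{c \in W \cap A_i} \frac{1}{|W \cap A_i|}
\;=\; \bigl|\{i \in N : W \cap A_i \neq \emptyset\}\bigr| \;\le\; n - |N^*|,
\]
using that every $i \in N^*$ has $W \cap A_i = \emptyset$. Averaging over the $k$ candidates of $W$, some $c \in W$ has removal cost at most $(n - |N^*|)/k$.

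Choosing this $c$, the net change in PAV-score is at least
\[
|N^*| - \frac{n - |N^*|}{k} \;=\; \frac{(k+1)|N^*| - n}{k} \;\ge\; \frac{(k+1)(n/k) - n}{k} \;=\; \frac{n}{k^2} \;>\; 0,
\]
contradicting the PAV-optimality of $W$. I do not expect any serious obstacle; the only subtle point is being careful that voters in $N^*$ contribute $0$ to the aggregate loss (because $c \in W$ cannot lie in their ballots), which is exactly what sharpens the bound enough to force a strict improvement whenever $|N^*| \ge n/k$.
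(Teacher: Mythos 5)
Your proof is correct and is essentially the paper's argument: both bound the sum over $c\in W$ of the marginal contributions (via the same double-counting identity showing each represented voter contributes exactly $1$, so the sum is at most $n-|N^*|$), apply the pigeonhole principle to find a cheap candidate to drop, and swap it for the commonly approved candidate $c^*$ to strictly increase the \pav-score. The only difference is cosmetic arithmetic (you work directly with $|N^*|\ge n/k$ rather than $s=\lceil n/k\rceil$), and your handling of the subtle point---that voters in $N^*$ contribute nothing to the aggregate loss---matches the paper's.
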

\begin{proof}
Fix a ballot profile $\calA = (A_1, \dots, A_n)$ and a $k> 0$ and let $s=\lceil \frac{n}{k}\rceil$.
Let $W$ be the output of \pav on $(\calA,k)$.
Suppose for the sake of contradiction that there exists
a set $N^*\subset N$, $|N^*|\ge s$, such that $\bigcap_{i\in N^*}A_i\neq\emptyset$, 
but $W\cap \bigcup_{i\in N^*}A_i=\emptyset$.
Let $c$ be some candidate approved by all voters in $N^*$.
                        
For each candidate $w\in W$, define its {\em marginal contribution} as
the difference between the \pav-score of $W$ and that of $W\setminus\{w\}$.
Let $m(W)$ denote the sum of marginal contributions of all candidates in $W$.
Observe that if $c$ were to be added to the winning set, this would
increase the \pav-score by at least~$s$. Therefore, it suffices to
argue that the marginal contribution of some candidate in $W$ is less
than $s$: this would mean that swapping this candidate with $c$
increases the \pav-score, a contradiction. To this end, we will prove
that $m(W)\le s(k-1)$; as $|W|=k$, our claim would then follow by the pigeonhole principle.
                        
Consider the set $N\setminus N^*$; we have $n\le sk$, so $|N\setminus N^*| \le n-s \le s(k-1)$.
Pick a voter $i\in N\setminus N^*$, and let $j=|A_i\cap W|$. If $j>0$,  
this voter contributes exactly $\frac{1}{j}$ to the marginal contribution of each candidate in $A_i\cap W$, 
and hence her contribution to $m(W)$ is exactly $1$.
If $j=0$, this voter does not contribute to $m(W)$ at all.
Therefore, we have $m(W)\leq |N\setminus N^*|\le s(k-1)$, which is what we wanted to prove.
\qed
\end{proof}

The reader may observe that the proof of Theorem~\ref{thm:pav-jr} applies to 
all voting rules of the form $\vecw$-\pav where the weight vector satisfies $w_1=1$, $w_j\le \frac{1}{j}$ for all $j\ge 1$.
In Section~\ref{sec:ejr} we will see that this condition on $\vecw$ is also necessary for $\vecw$-\pav to satisfy \jr.

Next, we consider \rav. As this voting rule can be viewed as a tractable approximation of \pav 
(recall that \pav is NP-hard to compute), one could expect that \rav satisfies \jr as well. However, this turns out not 
to be the case, at least if $k$ is sufficiently large. 

\begin{theorem}\label{thm:rav-not-jr}
\rav satisfies \jr for $k=2$, but fails it for $k \ge 10$.
\end{theorem}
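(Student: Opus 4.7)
For the positive direction ($k=2$), my plan is to assume $\rav$ outputs $W=\{c_1,c_2\}$ violating \jr\ and derive a contradiction by examining round 2 directly. Suppose $N^* \subseteq N$ with $|N^*|\ge n/2$ shares a candidate $c \in \bigcap_{i\in N^*} A_i$ while $A_i \cap W = \emptyset$ for every $i \in N^*$. First I would use the round-1 choice: since $c_1$ maximizes the AV score, its supporter set $N_1=\{i: c_1\in A_i\}$ satisfies $|N_1|\ge |N^*|\ge n/2$; and because $A_i\cap W=\emptyset$ for every $i\in N^*$, we must also have $N_1 \cap N^* = \emptyset$. Together these force $n$ to be even and $N=N_1\sqcup N^*$ with $|N_1|=|N^*|=n/2$ (odd $n$ would yield an immediate numerical contradiction). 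In round 2 every voter in $N^*$ carries weight $1$ and every voter in $N_1$ carries weight $1/2$ (since $c_1\in A_i$), so the weighted approval of $c$ is at least $|N^*|=n/2$, whereas any candidate not approved by anyone in $N^*$ draws only from $N_1$ voters at weight $1/2$ each, for a total of at most $|N_1|/2=n/4 < n/2$. Hence $\rav$ must pick a candidate in $\bigcup_{i\in N^*}A_i$ in round~2, contradicting the assumption.

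For the negative direction ($k\ge 10$), my plan is to exhibit an explicit counterexample ballot profile and trace $\rav$'s execution round by round. I would first build a base profile for $k=10$ consisting of a ``cohesive'' minority of $n_1$ voters all approving a single common candidate $a$ (with $n_1 \ge n/10$, so \jr\ demands $a\in W$) together with a ``majority'' whose approval sets overlap enough that in each of the ten rounds some majority candidate has weighted approval strictly above $n_1$. Once such a base profile is in hand, the result extends to every $k'>10$ by padding: add $k'-10$ dummy voters, each approving a fresh private candidate, so that the first ten rounds of $\rav$ on the enlarged profile reproduce the base execution while the JR threshold $n/k'$ continues to be met by the original minority.

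The hard part will be locating the base profile. A short calculation rules out the most natural candidates: in a monolithic majority where every majority voter approves the same $k$ candidates, the round-$j$ weight of an unchosen majority candidate is exactly $|\text{majority}|/j$; a disjoint partition of the majority into $T$ blocks each approving $k/T$ candidates yields analogous denominators. In both cases the round-$k$ weight of the last majority pick falls at or below $n_1$ exactly when the \jr\ budget $|\text{majority}|\le n_1(k-1)$ is saturated, so no strict domination over $a$ is possible. A successful profile must therefore use a more intricate overlap structure---for instance, voters who each omit a single candidate from a nearly complete approval set, or a combinatorial design with controlled pairwise intersection sizes---so that every majority voter retains substantial residual weight through all ten rounds. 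Once such a profile is in hand, the verification reduces to a mechanical round-by-round comparison of weighted approvals.
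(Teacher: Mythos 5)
Your argument for $k=2$ is correct and complete, and it is in fact a slightly more direct route than the paper's: the paper reduces to the case analysis from its \av proof (disposing of the trivial cases where every candidate has approval score below $n/2$ or some candidate exceeds $n/2$, then handling the tied case), whereas you go straight to the contradiction by observing that under the violation hypothesis the electorate splits as $N=N_1\sqcup N^*$ with $|N_1|=|N^*|=n/2$, so in round~2 the common candidate of $N^*$ has weighted score $n/2$ while any candidate unapproved by $N^*$ scores at most $n/4$. That quantitative gap ($n/4<n/2$) is exactly why, as the paper notes, no tie-breaking assumption is needed for \rav even though one is needed for \av. This half is fine.

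The negative direction, however, has a genuine gap: the entire content of that half of the theorem is the explicit counterexample profile, and you do not produce one --- you correctly rule out the naive constructions (your calculation that a monolithic majority of size $M\le 9n_1$ yields a tenth-round weight $M/10<n_1$ is right) and then stop at ``once such a profile is in hand.'' The structure you are searching for is not a deep combinatorial design with voters retaining weight through many rounds; it is much shallower. The paper's profile uses $1199$ voters over $11$ candidates, where every voter approves at most \emph{two} candidates: a few ``hub'' candidates ($c_1$, $c_4$, $c_7$) are each shared by two or three blocks of voters who also each approve a private ``spoke'' candidate, and each spoke is additionally approved by a block of singleton voters. The hubs win the early rounds on raw approval counts ($162$, $162$, $147$), which halves the weight of only the \emph{shared} portion of each spoke's support, leaving each spoke at a reweighted score of $80+81/2=120.5$ or $96+49/2=120.5$ --- just above the $120$ singleton voters of the unrepresented candidate $c_{11}$, while $120>1199/10=n/k$. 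So the trick is to keep each non-hub candidate's post-reweighting score \emph{barely} above $n/k$ by mixing halved and unhalved supporters, not to preserve large residual weights over ten rounds. Your padding argument for extending from $k=10$ to $k>10$ (one fresh candidate plus $n/k$ fresh private supporters per extra seat) does match the paper's, but without the base profile the proof is incomplete.
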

\begin{proof}
For $k=2$, we can use essentially the same argument as for \av;
however, we do not need to assume anything about the tie-breaking rule.	
This is because if there are three candidates, $c$, $c'$, and $c''$, such that
$c$ and $c''$ are approved by the same $\frac{n}{2}$ voters, whereas $c'$ is approved
by the remaining $\frac{n}{2}$ voters, and \rav selects $c$
in the first round, then in the second round \rav favors $c'$ over $c''$.

Now, suppose that  $k=10$.
Consider a profile over a candidate set $C=\{ c_1, \dots , c_{11}\}$ with $1199$ 
voters who submit the following ballots:
\begin{align*}
\!\!\!81 \times &\{c_1,c_2\},		&81 \times &\{c_1,c_3\},	&80 \times &\{c_2\},		&80 \times &\{c_3\},\\
\!\!\!81 \times &\{c_4,c_5\},		&81 \times &\{c_4,c_6\}, 	&80 \times &\{c_5\},		&80 \times &\{c_6\},\\
\!\!\!49 \times &\{c_7,c_8\},		&49 \times &\{c_7,c_9\},	&49 \times &\{c_7,c_{10}\}, 	& &\\	
\!\!\!96 \times &\{c_8\}, 		&96 \times &\{c_9\},		&96 \times &\{c_{10}\},		&\!\!\!\!120 \times &\{c_{11}\}.
\end{align*}

Candidates $c_1$ and $c_4$ are each approved by $162$  voters, the most of any candidate, and these blocks of $162$ voters do not overlap, 
so \rav selects $c_1$ and $c_4$ first. This reduces the \rav scores of $c_2, c_3, c_5$ and $c_6$ from 
$80+81=161$ to $80 + 40.5 = 120.5$, so $c_7$, whose \rav score is $147$, is selected next. 
Now, the \rav scores of $c_8, c_9$ and $c_{10}$ become $96 + 24.5 = 120.5$. 
The selection of any of $c_2, c_3, c_5,c_6,c_8,c_9$ or $c_{10}$ does not affect the \rav 
score of the others, so all seven
of these candidates will be selected before $c_{11}$,
who has $120$ approvals. 
Thus, after the selection of $10$ candidates, there are $120 >  \frac{1199}{10} = \frac{n}{k}$ 
unrepresented voters who jointly approve $c_{11}$.

To extend this construction to $k > 10$, we create $k-10$ additional candidates and $120(k-10)$ 
additional voters such that for each new candidate, there are $120$ new voters who approve 
that candidate only. Note that we still have $120 > \frac{n}{k}$. \rav will proceed to select 
$c_1, \dots, c_{10}$, followed by $k-10$ additional candidates, 
and $c_{11}$ or one of the new candidates will remain unselected.
\qed
\end{proof}
While \rav itself does not satisfy \jr, 	
one could hope that this can be fixed by tweaking the weights, 
i.e.~that $\vecw$-\rav satisfies \jr for a suitable weight vector $\vecw$. 
However, it turns out that $(1, 0, \dots)$ is essentially the only weight
vector for which this is the case: Theorem~\ref{thm:rav-not-jr} extends
to $\vecw$-\rav for {\em every} weight vector $\vecw$ with 
$w_1=1$, $w_2>0$. 

\begin{theorem}\label{thm:wrav-not-jr}
For every vector $\vecw=(w_1,w_2,\dots )$ with $w_1=1$, $w_2>0$, 
there exists a value of $k_0>0$ such that $\vecw$-\rav does not satisfy \jr for $k>k_0$.
\end{theorem}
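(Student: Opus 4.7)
My plan is to generalize the $10$-candidate construction in Theorem~\ref{thm:rav-not-jr} into a family of gadgets parametrized by an integer $p\ge 2$ that I will take large relative to $w:=w_2$. The intuition is that the choice $w=\tfrac12$ in that proof made the drop in a secondary candidate's weighted score after the central candidate was picked equal to exactly $a/2$; for smaller $w$ the drop $a(1-w)$ is larger, so more pair-voters must overlap at a single central candidate in order to keep the secondaries above the target $c^*$.

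Concretely, I would build a \emph{gadget} on candidates $c_1,\dots,c_{p+1}$ consisting of, for each $i\in\{2,\dots,p+1\}$, $a$ voters approving $\{c_1,c_i\}$ together with $b$ voters approving $\{c_i\}$; then $c_1$ has approval score $pa$, every $c_i$ with $i\ge 2$ has score $a+b$, and once $c_1$ enters the committee the weighted score of each $c_i$ becomes $aw+b$. I would take $m$ disjoint copies of this gadget together with one further target candidate $c^*$ approved only by $t$ extra voters, and set $k=m(p+1)$. Because the gadgets share no voters and no candidates, once the strict inequalities $(p-1)a>b$ and $aw+b>t$ are arranged, $\vecw$-\rav will pick all $m$ central candidates in the first $m$ rounds (they are tied at $pa$ while $c^*$ stays at $t$) and all $mp$ secondary candidates in the next $mp$ rounds (each at $aw+b$ while $c^*$ is still at $t$). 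Then $c^*$ is excluded, and its $t$ supporters form a \jr-violating coalition as long as $t\ge n/k$, where $n=mp(a+b)+t$.

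Collecting the inequalities, the exact JR condition is $(m(p+1)-1)t\ge mp(a+b)$, which for $m$ large amounts to $t$ slightly exceeding $\tfrac{p(a+b)}{p+1}$. Combined with $aw+b>t$ and $b<(p-1)a$, feasibility reduces to the single window
\[
a\bigl(p-(p+1)w\bigr)<b<a(p-1),
\]
which is non-empty precisely when $(p+1)w>1$, i.e., $p>1/w-1$. Since $w>0$, any sufficiently large integer $p$ satisfies this; for such $p$ the remaining inequality $pa>t$ is automatic because $t<aw+b\le a+(p-1)a=pa$. Setting $k_0=m(p+1)$ for a suitably large $m$ then yields the claim.

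I expect the main obstacle to be purely bookkeeping: exhibiting concrete positive-integer values of $a,b,t,m$ inside this open window for a given $w$, and choosing $m$ large enough to absorb the $-1$ appearing in the exact JR inequality. Tie-breaking never enters because at every decision point the winning candidate's weighted score strictly exceeds the constant score $t$ of $c^*$, and the re-weighting in one gadget has no effect on the scores in any other.
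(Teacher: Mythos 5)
Your gadget family is sound, and it is in essence the paper's own construction with the parameters left symbolic: the paper fixes $s$ with $w_2\ge\frac{1}{s}$ and uses $m=2s+3$ gadgets, each with one central candidate and $p=2s+1$ secondaries, $a=s^2$ pair-voters and $b=2s^3-s$ singleton-voters per secondary, a target candidate approved by $t=2s^3-1$ voters, plus one throwaway candidate whose $s^2-7s-5$ supporters pad $n$ so that $\frac{n}{k_0}=t$ exactly. Your feasibility window $a\bigl(p-(p+1)w\bigr)<b<a(p-1)$, non-empty precisely when $(p+1)w>1$, is the correct abstraction of why those numbers work, and your observations that tie-breaking never matters (the gadget candidates strictly beat the target at every round) and that disjoint gadgets do not interfere are both right.

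The one step you have not supplied is the passage from $k=k_0=m(p+1)$ to arbitrary $k>k_0$, which the theorem statement requires: your profile has only $m(p+1)+1$ candidates, so as written it establishes a failure only at the single committee size $k_0$ (and, varying $m$, at multiples of $p+1$), not at every $k>k_0$. The fix is the same padding used at the end of the proof of Theorem~\ref{thm:rav-not-jr}: for $k>k_0$, add $k-k_0$ fresh candidates, each approved by $t$ fresh voters. The new total $n'=n+(k-k_0)t$ satisfies $t\ge\frac{n'}{k}$ if and only if $t\ge\frac{n}{k_0}$, the new candidates sit at score $t<aw+b<pa$ throughout and so never displace a gadget candidate, and after the $k_0$ gadget candidates are chosen exactly one of the $k-k_0+1$ score-$t$ candidates is left out; its $t\ge\frac{n'}{k}$ supporters, who approve no one else, witness the \jr violation. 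With that paragraph added, together with a concrete integer choice of $a,b,t,m$ inside your window (which exists after scaling $a$ and $b$ up and taking $m$ large, as you note), the argument is complete.
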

\begin{proof}
Pick a positive integer $s\ge 8$ such that $w_2\ge \frac{1}{s}$.
Let $C=C_1\cup C_2\cup\{x,y\}$, where
$$
C_1 = \{c_{i, j}\mid i=1,\dots,2s+3,j=1,\dots,2s+1\}, \qquad
C_2 =\{c_i\mid i=1, \dots, 2s+3\}.
$$
For each $i=1, \dots, 2s+3$ and each $j=1,\dots, 2s+1$ we construct
$2s^3-s$ voters who approve $c_{i,j}$ only and
$s^2$ voters who approve $c_{i,j}$ and $c_i$ only.
Finally, we construct $2s^3-1$ voters who approve $x$ only and $s^2-7s-5$ voters who approve $y$ only
(note that the number of voters who approve $y$ is positive by our choice of $s$).

Set $k_0=(2s+2)(2s+3)=|C_1\cup C_2|$.
Note that the number of voters $n$ is given by
\begin{align*}
(2s+3)&(2s+1)(2s^3+s^2-s)+(2s^3-1)+(s^2-7s-5) \\
&= (2s+2)(2s+3)(2s^3-1)=(2s^3-1)k_0,
\end{align*}
and hence $\frac{n}{k_0}=2s^3-1$.

Under $\vecw$-\rav initially the score of each candidate in $C_2$ is $s^2(2s+1)=2s^3+s^2$,
the score of each candidate in $C_1$ is $2s^3+s^2-s$,
the score of $x$ is $2s^3-1$, and the score of $y$ is $s^2-7s-5$, so in the first $2s+3$
rounds the candidates from $C_2$ get elected.
After that, the score of every candidate in $C_1$
becomes $2s^3-s+w_2s^2\ge 2s^3-s+s=2s^3$, while the scores of $x$ and $y$ remains unchanged.
Therefore, in the next $(2s+3)(2s+1)$ rounds the candidates from $C_1$ get elected.
At this point, $k$ candidates are elected, and $x$ is not elected, even though
the $2s^3-1=\frac{n}{k_0}$ voters who approve him do not approve of any of the candidates in the winning set.

To extend this argument to larger values of $k$, we proceed as in the proof of Theorem~\ref{thm:rav-not-jr}:
for $k>k_0$, we add $k-k_0$ new candidates, and for each new candidate we construct $2s^3-1$ new voters
who approve that candidate only. Let the resulting number of voters be $n'$; we have $\frac{n'}{k}=2s^3-1$,
so $\vecw$-\rav will first select the candidates in $C_2$, followed by the candidates in $C_1$, and then it will choose
$k-k_0$ winners among the new candidates and $x$. As a result, either $x$ or one of the new candidates will remain unselected.
\qed
\end{proof}

\begin{remark}
Theorem~\ref{thm:wrav-not-jr} partially subsumes Theorem~\ref{thm:rav-not-jr}:
it implies that \rav fails \jr, but the proof only shows that
this is the case for $k\ge 18\cdot 19=342$, while
Theorem~\ref{thm:rav-not-jr} states that \rav fails \jr for $k\ge 10$ already.
We chose to include the proof of Theorem~\ref{thm:rav-not-jr} because we feel that
it is useful to know what happens for relatively small values of $k$.
Note, however, that Theorem~\ref{thm:rav-not-jr} leaves open the question
of whether \rav satisfies \jr for $k=3, \dots, 9$.
Very recently, \citet{SFFB16a} have answered this question by showing 
that \rav satisfies \jr for $k\le 5$ and fails it for $k\ge 6$.
\end{remark}

\noindent 
If we allow the entries of the weight vector to depend on the number of voters $n$,
we can obtain another class of rules that provide justified representation:
the argument used to show that \grav satisfies \jr extends to $\vecw$-\rav 
where the weight vector $\vecw$ satisfies
$w_1=1$, $w_j\le\frac{1}{n}$ for $j>1$. In particular, the rule $(1, \frac{1}{n}, \frac{1}{n^2}, \dots, )$-\rav
is somewhat more appealing than \grav:
for instance, if $\bigcap_{i\in N}A_i=\{c\}$ and $k>1$, \grav will pick $c$, 
and then behave arbitrarily, whereas $(1, \frac{1}{n}, \frac{1}{n^2}, \dots, )$-\rav will also
pick $c$, but then it will continue to look for candidates 
approved by as many voters as possible.

\medskip
We conclude this section by showing that \monav satisfies \jr.

\begin{theorem}\label{thm:monav-jr}
\monav satisfies \jr.
\end{theorem}
\begin{proof}
Fix a ballot profile $\calA = (A_1, \dots, A_n)$ and a $k>0$.
Let $W$ be an output of \monav on $(\calA,k)$. 
If $A_i\cap W\neq\emptyset$ for all $i\in N$,
then $W$ provides justified representation for $(\calA, k)$.
Thus, assume that this is not the case, i.e.~there exists
some voter $i$ with $A_i\cap W=\emptyset$. 
Consider a valid mapping $\pi:N\to W$ whose Monroe score
equals the Monroe score of $W$, let $c=\pi(i)$, and
set $s=|\pi^{-1}(c)|$; note that 
$s\in\{\lfloor\frac{n}{k}\rfloor, \lceil\frac{n}{k}\rceil\}$.

Suppose for the sake of contradiction 
that $W$ does not provide justified representation for $(\calA, k)$. 
Then by our choice of $s$ there exists a set $N^*\subset N$, $|N^*|=s$, 
such that $\bigcap_{i\in N^*}A_i\neq\emptyset$,
but $W\cap (\bigcup_{i\in N^*}A_i)=\emptyset$.
Let $c'$ be some candidate approved by all voters in $N^*$,
and set $W'=(W\setminus\{c\})\cup\{c'\}$. To obtain a contradiction,
we will argue that $W'$ has a higher Monroe score than $W$.

To this end, we will modify $\pi$ by first swapping the voters in $N^*$
with voters in $\pi^{-1}(c)$ and then assigning the voters in $N^*$ to $c'$.
Formally, let $\sigma:\pi^{-1}(c)\setminus N^*\to N^*\setminus \pi^{-1}(c)$ 
be a bijection between 
$\pi^{-1}(c)\setminus N^*$ and $N^*\setminus \pi^{-1}(c)$. 
We construct a mapping $\hat{\pi}: N\to W'$ by setting 
$$
\hat{\pi}(i)=
\begin{cases}
\pi(\sigma(i)) &\text{ for } i\in \pi^{-1}(c)\setminus N^*,\\
c'             &\text{ for } i\in N^*,\\
\pi(i)         &\text{ for } i\not\in \pi^{-1}(c)\cup N^*.
\end{cases}
$$
Note that $\hat{\pi}$ is a valid mapping: we have $|\hat{\pi}^{-1}(c')|=s$
and $|\hat{\pi}^{-1}(c'')|=|\pi^{-1}(c'')|$ for each $c''\in W'\setminus\{c'\}$.
Now, let us consider the impact of this modification on the Monroe score.
The $s$ voters in $N^*$ contributed nothing to the Monroe score of $\pi$,
and they contribute $s$ to the Monroe score of $\hat\pi$. By our choice 
of $c$, the voters in $\pi^{-1}(c)$ contributed at most $s-1$ to the Monroe
score of $\pi$, and their contribution to the Monroe score of $\hat\pi$
is non-negative. For all other voters their contribution
to the Monroe score of $\pi$ is equal to their contribution to the Monroe score of $\pi'$.
Thus, the total Monroe score of $\hat\pi$ is higher than that of $\pi$.
Since the Monroe score of $W$ is equal to the Monroe score of $\pi$,
and, by definition, the Monroe score of $W'$ is at least the Monroe score of $\hat\pi$,
we obtain a contradiction.~\qed
\end{proof}


\section{Extended Justified Representation}\label{sec:ejr}
We have identified four (families of) voting rules that satisfy \jr for arbitrary ballot profiles:
$\vecw$-\pav with $w_1=1$, $w_j\le \frac{1}{j}$ for $j>1$ (this class includes \pav), $\vecw$-\rav
with $w_1=1$, $w_j\le\frac{1}{n}$ for $j>1$ (this class includes \grav), $\grav^T$ and \monav.
The obvious advantage of \grav and $\grav^T$
is that their output can be computed efficiently, whereas computing the outputs of \pav or \monav is NP-hard.
However, \grav puts considerable emphasis on representing {\em every} voter, at the expense
of ensuring that large sets of voters with shared preferences are allocated an adequate number 
of representatives. This approach may be problematic in a variety of applications, such as selecting
a representative assembly, or choosing movies to be shown on an airplane, or foods to be provided at a banquet
\citep[see the discussion by][]{SFL15}. In particular, it may be desirable to have several assembly members
that represent a widely held political position, both to reflect the popularity of this position,
and to highlight specific aspects of it, as articulated by different candidates. 
Consider, for instance, the following example. 

\begin{example}\label{ex:100}
Let $k=3$, $C=\set{a,b,c,d}$, and $n=100$. One voter approves $c$, 
one voter approves $d$, and $98$ voters approve $a$ and $b$. 	
\grav would include both $c$ and $d$ in the winning set, whereas in many settings it would be more reasonable 
to choose both $a$ and $b$ (and one of $c$ and $d$); indeed, this is exactly what $\grav^T$ would do.
\end{example}

This issue is not addressed by the \jr axiom, as this axiom does not care if a given voter is represented
by one or more candidates. Thus, if we want to capture the intuition that large cohesive groups
of voters should be allocated several representatives, we need a stronger condition. 
Recall that \jr says that each group of $\frac{n}{k}$ voters that all approve the same candidate ``deserves''
at least one representative. It seems reasonable to scale this idea and say that, for every
$\ell>0$, each group of $\ell\cdot \frac{n}{k}$ voters that all approve the same $\ell$ candidates ``deserves''
at least $\ell$ representatives. This approach can be formalized as follows.
		
\begin{definition}[Extended justified representation (EJR)]
Given a ballot profile $(A_1, \dots, A_n)$ over a candidate set $C$, a target committee size $k$, $k\le |C|$,
and a positive integer $\ell$, $\ell\le k$, 
we say that a set of candidates $W$, $|W|=k$, {\em provides $\ell$-justified representation
for $(\calA, k)$} if there does not exist a set of voters $N^*\subseteq N$ with $|N^*|\ge \ell\cdot \frac{n}{k}$ such that
$|\bigcap_{i\in N^*}A_i|\ge \ell$, but $|A_i\cap W|<\ell$ for each $i\in N^*$;
we say that $W$ {\em provides extended justified representation (\ejr) for $(\calA, k)$} 
if it provides $\ell$-\jr for $(\calA, k)$ for all $\ell$, $1\le\ell\le k$.
We say that an approval-based voting rule {\em satisfies $\ell$-justified representation ($\ell$-\jr)} if for every profile
$\calA = (A_1, \dots, A_n)$ and every target committee size $k$ it outputs a committee that provides
$\ell$-\jr for $(\calA, k)$. Finally, we say that a rule 
{\em satisfies extended justified representation (\ejr)} if it satisfies $\ell$-\jr for all $\ell$, $1\le\ell\le k$.
\end{definition}
Observe that \ejr implies \jr, because the latter coincides with $1$-\jr. 

The definition of \ejr interprets ``a group $N^*$ deserves at least $\ell$ representatives''
as ``at least one voter in $N^*$ gets $\ell$ representatives''. Of course, 
other interpretations are also possible: for instance, we can require that each 
voter in $N^*$ is represented by $\ell$ candidates in the winning committee
or, alternatively, that the winning committee contains at least $\ell$
candidates each of which is approved by some member of $N^*$. However, the former
requirement is too strong: it differs from \jr for $\ell=1$ and in Section~\ref{sec:scr}
we show that there are ballot profiles for which commitees with this property do not exist even for $\ell=1$.
The latter approach, which was very recently proposed by \citet{SFFB16a} (see the discussion in Section~\ref{sec:related}),
is not unreasonable; in particular, it coincides with \jr for $\ell=1$. However, 
it is strictly less demanding than the approach we take: clearly, every rule
that satisfies \ejr also satisfies this condition. As it turns out (Theorem~\ref{thm:pav-ejr})
that every ballot profile admits a committee that provides \ejr, 
the \ejr axiom offers more guidance in choosing a good winning committee
than its weaker cousin, while still leaving us with a non-empty set of candidate committees to choose from.
Finally, the \ejr axiom in its present form is very similar to a core stability condition
for a natural NTU game associated with the input profile (see Section~\ref{sec:core});
it is not clear if the axiom of \citet{SFFB16a} admits a similar interpretation.

\subsection{Extended Justified Representation under Approval-Based Rules}

It is natural to ask which of the voting rules that satisfy \jr also satisfy \ejr.
\exref{ex:100} immediately shows that for \grav the answer is negative. 
Consequently, no $\vecw$-\rav rule 
such that the entries of $\vecw$ do not depend on $n$ satisfies \ejr:
if $w_2=0$, this rule is \grav and if $w_2>0$, our claim follows from Theorem~\ref{thm:wrav-not-jr}.
Moreover, \exref{ex:100} also implies that $\vecw$-\rav rules with 
$w_j \leq \frac{1}{n}$ for $j>1$ also fail \ejr. 

The next example shows that \mav fails \ejr even if each voter approves exactly $k$ candidates
(recall that under this assumption \mav satisfies \jr).

\begin{example}
Let $k=4$, $C=C_1\cup C_2\cup C_3\cup C_4$, where $|C_1|=|C_2|=|C_3|=|C_4|=4$
and the sets $C_1,C_2,C_3,C_4$ are pairwise disjoint. Let $\calA=(A_1, \dots, A_8)$,
where $A_i=C_i$ for $i=1,2,3$, and $A_i=C_4$ for $i=4,5,6,7,8$.
\mav will select exactly one candidate from each of the sets $C_1, C_2, C_3$ and $C_4$,
but \ejr dictates that at least two candidates from $C_4$ are chosen.
\end{example}
%
%
%
%
%
%


%
Further, \monav fails \ejr as well.

\begin{example}\label{ex:monav}
Let $k=4$, $C=\{c_1, c_2, c_3, c_4, a, b\}$, $N=\{1, \dots, 8\}$, 
$A_i=\{c_i\}$ for $i=1, \dots, 4$,
$A_i=\{c_{i-4}, a, b\}$ for $i=5, \dots, 8$.
\monav outputs $\{c_1, c_2, c_3, c_4\}$ on this profile, as this is the unique set of candidates
with the maximum Monroe score. Thus, every voter is represented by a single candidate, though the voters in
$N^*=\{5, 6, 7, 8\}$ ``deserve'' two candidates.
\end{example}
Example~\ref{ex:monav} illustrates the conflict between the \ejr axiom and the requirement to represent
all voters whenever possible. We discuss this issue in more detail in Section~\ref{sec:related}.

For $\grav^T$, it is not hard to construct an example where this rule fails \ejr 
for some way of breaking intermediate ties.

\begin{example}\label{ex:grav-t}
Let $N=\{1, \dots, 8\}$, $C=\{a, b, c, d, e, f\}$, 
$A_1=A_2=\{a\}$, $A_3=A_4=\{a, b, c\}$, $A_5=A_6=\{d, b, c\}$, $A_7=\{d, e\}$, $A_8=\{d, f\}$.
Suppose that $k=4$. Note that all voters in $N^*=\{3, 4, 5, 6\}$ approve $b$ and $c$, and $|N^*|=2\cdot \frac{n}{k}$. 
Under $\grav^T$, at the first step candidates $a$, $b$, $c$
and $d$ are tied, so we can select $a$ and remove voters $3$ and $4$. Next, we have to select $d$;
we can then remove voters $5$ and $6$. In the remaining two steps, we add $e$ and $f$ to the committee.
The resulting committee violates \ejr, as each voter in $N^*=\{3, 4, 5, 6\}$
is only represented by a single candidate.
\end{example}
We note that in Example~\ref{ex:grav-t} we can remove voters $1$ and $2$ after selecting $a$,
which enables us to select $b$ or $c$ in the second step and thereby obtain a committee that provides \ejr.
In fact, we were unable to construct an example where $\grav^T$ fails \ejr for {\em all} ways
of breaking intermediate ties; we now conjecture
that it is always possible to break intermediate ties in $\grav^T$
so as to satisfy \ejr. However, it is not clear if a tie-breaking rule with this property 
can be formulated in a succinct manner. Thus, $\grav^T$ does not seem particularly useful
if we want to find a committee that provides \ejr: even if our conjecture is true, we may have to explore all ways
of breaking intermediate ties.

In contrast, we will now show that \pav satisfies \ejr irrespective of the tie-breaking rule.
		
\begin{theorem}\label{thm:pav-ejr}
\pav satisfies \ejr.
\end{theorem}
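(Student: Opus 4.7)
\noindent\emph{Proof plan.} The plan is to generalize the marginal-contribution swap argument from Theorem~\ref{thm:pav-jr}. Assume for contradiction that \pav returns a committee $W$ with $|W|=k$ that fails $\ell$-\jr for some $\ell\ge 1$: there exists $N^*\subseteq N$ with $|N^*|\ge \ell\cdot\frac{n}{k}$ and $|\bigcap_{i\in N^*}A_i|\ge\ell$, yet, setting $V_i = |A_i\cap W|$, we have $V_i\le \ell-1$ for every $i\in N^*$. Since the $\ge\ell$ commonly approved candidates all lie in every $A_i$ with $i\in N^*$, at most $\ell-1$ of them can belong to $W$ (else some such $V_i$ would be $\ge\ell$), so some candidate $c^*\in \bigcap_{i\in N^*}A_i\setminus W$ exists.

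First I would lower-bound the marginal gain of inserting $c^*$ into $W$, using that $V_i+1\le\ell$ for each $i\in N^*$:
\[
\Delta^+(c^*)\;=\;\sum_{i:\,c^*\in A_i}\frac{1}{V_i+1}\;\ge\;\sum_{i\in N^*}\frac{1}{\ell}\;=\;\frac{|N^*|}{\ell}\;\ge\;\frac{n}{k}.
\]
The decisive step is to extract a matching \emph{upper} bound on $\Delta^+(c^*)$ from the global optimality of $W$: for every $w\in W$, the \pav-score of $W\cup\{c^*\}\setminus\{w\}$ is at most that of $W$. A voter-by-voter case analysis (treating the four cases of whether $c^*$ and $w$ lie in $A_i$) shows that the change in \pav-score under this swap equals
\[
\Delta^+(c^*)-\Delta^-(w)+\!\sum_{i:\,c^*,w\in A_i}\!\Bigl(\tfrac{1}{V_i}-\tfrac{1}{V_i+1}\Bigr),
\]
where $\Delta^-(w)=\sum_{i:\,w\in A_i}1/V_i$ is the marginal contribution of $w$ in $W$. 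Optimality forces this expression to be non-positive, which rearranges to $\Delta^-(w)\ge\Delta^+(c^*)+\sum_{i:\,c^*,w\in A_i}\bigl(\tfrac{1}{V_i}-\tfrac{1}{V_i+1}\bigr)$.

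Summing these $k$ inequalities over $w\in W$ is the key bookkeeping step. The left-hand side collapses to $\sum_{w\in W}\Delta^-(w)=|\{i:V_i>0\}|$, exactly as in the \jr proof, while the interaction term on the right telescopes to $\sum_{i:\,c^*\in A_i,\,V_i\ge 1}\tfrac{1}{V_i+1}$. Writing $\Delta^+(c^*)=a+b$ with $a=|\{i:c^*\in A_i,\,V_i=0\}|$ and $b=\sum_{i:\,c^*\in A_i,\,V_i\ge 1}\tfrac{1}{V_i+1}$, and using the disjointness bound $|\{i:V_i>0\}|\le n-a$, the inequality becomes $n-a\ge k(a+b)+b$, i.e.\ $(k+1)\Delta^+(c^*)\le n$, so $\Delta^+(c^*)\le \tfrac{n}{k+1}$. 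Combined with the earlier lower bound $\Delta^+(c^*)\ge n/k$ this forces $n/k\le n/(k+1)$, a contradiction.

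The main obstacle is handling the interaction term correctly. Voters who approve both $c^*$ and the removed candidate $w$ partially cancel in the swap, producing the slack $\tfrac{1}{V_i}-\tfrac{1}{V_i+1}$ on the right-hand side of each optimality inequality. Discarding this slack only yields $\Delta^+(c^*)\le n/k$, which merely matches the lower bound and gives no contradiction; it is precisely the extra positive $b$ that survives the summation, coming from the strict concavity of the harmonic weights $1/j$, that sharpens the upper bound to $n/(k+1)$ and closes the argument. This also indicates why the specific \pav weights are the tight choice: for a general weight vector $\vecw$ the analogous slack is $w_{V_i}-w_{V_i+1}$, and too small a gap would defeat the strict contradiction.
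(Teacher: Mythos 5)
Your proof is correct, and it reaches the contradiction by a genuinely different piece of bookkeeping than the paper's. The paper's argument shares your starting points---the lower bound $\Delta^+(c^*)\ge |N^*|/\ell\ge n/k$ and the identity that the marginal contributions of the members of $W$ sum to $|\{i:V_i>0\}|\le n$---but it then applies the pigeonhole principle to find some $w\in W$ with $\Delta^-(w)\le n/k$, and must handle the tight case (every $w\in W$ has marginal contribution exactly $n/k$) separately: there it invokes Theorem~\ref{thm:pav-jr} to find a voter $i\in N^*$ with a representative $w'\in W\cap A_i$ and checks that removing $w'$ first boosts the gain from adding $c^*$ to at least $(|N^*|-1)\cdot\frac{1}{\ell}+\frac{1}{\ell-1}>\frac{n}{k}$. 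You avoid the case split entirely by summing all $k$ swap inequalities while retaining the cross terms $\frac{1}{V_i}-\frac{1}{V_i+1}$ for voters who approve both $c^*$ and the removed candidate; I verified that these telescope per voter to $\frac{1}{V_i+1}$ (since each such voter appears in exactly $V_i$ of the $k$ inequalities), that $\sum_{w\in W}\Delta^-(w)=|\{i:V_i>0\}|\le n-a$, and that the rearrangement $n\ge(k+1)(a+b)=(k+1)\Delta^+(c^*)$ is right, giving the strict gap $\frac{n}{k+1}<\frac{n}{k}$. Your route is self-contained (it does not need the \jr theorem as a lemma, and subsumes the $\ell=1$ case), and your closing remark about where the harmonic weights enter is exactly the phenomenon formalized in the paper's Theorem~\ref{thm:wpav-not-ejr}.
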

\begin{proof}
Suppose that \pav violates \ejr for some value of $k$, and consider
a ballot profile $A_1, \dots, A_n$, a value of $\ell>0$ and a set of voters $N^*$,
$|N^*| = s \ge \ell\cdot \frac{n}{k}$, that witness this. Let $W$, $|W|=k$, be the winning set.
We know that at least one of the $\ell$ candidates approved by all
voters in $N^*$ is not elected; let $c$ be some such candidate.
Each voter in $N^*$ has at most $\ell-1$ representatives in $W$, 
so the marginal contribution of $c$ (if it were to be added to $W$) would be at least
$s\cdot \frac{1}{\ell} \ge \frac{n}{k}$. On the other hand, the argument in the proof 
of Theorem~\ref{thm:pav-jr} can be modified to show 
that the sum of marginal contributions of candidates in $W$ is at most $n$.

Now, consider some candidate $w\in W$ with the smallest marginal
contribution; clearly, his marginal contribution is at most $\frac{n}{k}$. 
If it is strictly less than $\frac{n}{k}$, we are done, as we can improve the
total \pav-score by swapping $w$ and $c$, a contradiction. Therefore
suppose it is exactly $\frac{n}{k}$, and therefore the marginal contribution
of each candidate in $W$ is exactly $\frac{n}{k}$. Since \pav satisfies \jr, we
know that $A_i\cap W\neq\emptyset$ for some $i\in N^*$. Pick some
candidate $w'\in W\cap A_i$, and set $W'=(W\setminus\{w'\})\cup \{c\}$.
Observe that after $w'$ is removed, adding $c$ increases
the total \pav-score by at least $(s-1)\cdot \frac{1}{\ell}+\frac{1}{\ell-1}>\frac{n}{k}$. 
Indeed, $i$ approves at most $\ell-2$ candidates in $W\setminus\{w'\}$ and
therefore adding $c$ to $W\setminus\{w'\}$ contributes at least $\frac{1}{\ell-1}$ to her
satisfaction. Thus, the \pav-score of $W'$ is higher than that of $W$, a contradiction again.
\qed
\end{proof}

Interestingly, Theorem~\ref{thm:pav-ejr} does not extend to weight vectors other than $(1, \frac12, \frac13, \dots)$:
our next theorem shows that \pav is essentially the unique $\vecw$-\pav rule that satisfies \ejr.

\begin{theorem}\label{thm:wpav-not-ejr}
For every weight vector $\vecw$ with $w_1=1$, $\vecw \neq (1, \frac12, \frac13, \dots)$,   
the rule $\vecw$-\pav does not satisfy \ejr.
\end{theorem}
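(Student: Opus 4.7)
The plan is to split into two cases depending on whether $w_j \le 1/j$ holds for every $j \ge 1$.

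\emph{Case 1: some $j \ge 2$ satisfies $w_j > 1/j$.} I will show that $\vecw$-\pav fails plain \jr, which implies failure of \ejr. The construction uses $q$ disjoint ``blocks'' of $M = qj-1$ voters each, where block $i$ unanimously approves a fresh set $\{b_{i,1},\dots,b_{i,j}\}$ of $j$ candidates, together with a small group $N^*$ of $q$ voters who each approve only one additional candidate $a$. Taking $C = \{a\} \cup \bigcup_i \{b_{i,1},\dots,b_{i,j}\}$ and $k = qj$, the total number of voters is $n = q^2 j$, the quota $n/k$ equals $q = |N^*|$, and $|C| = k+1$. Hence any size-$k$ committee omits exactly one candidate, so up to symmetry only two committees need to be compared: $W_1 = \bigcup_i \{b_{i,1},\dots,b_{i,j}\}$ (which leaves $a$ out and violates \jr, because $N^*$ is a quota) and $W_2 = (W_1 \setminus \{b_{1,j}\}) \cup \{a\}$. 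A direct computation shows that the $\vecw$-\pav score of $W_1$ minus that of $W_2$ equals $M w_j - q = (qj-1)w_j - q$, which is strictly positive whenever $q > w_j/(jw_j - 1)$; such $q$ exists because $jw_j > 1$ by assumption. Hence $\vecw$-\pav selects $W_1$, violating \jr.

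\emph{Case 2: $w_j \le 1/j$ for all $j$ but $\vecw \neq (1,\tfrac12,\tfrac13,\dots)$.} Then there is some $\ell \ge 2$ with $w_\ell < 1/\ell$. I construct a profile that violates $\ell$-\jr. Let $m$ be a parameter to be chosen. Take $N^*$ to consist of $\ell(m+1)$ voters all approving $\{d_1,\dots,d_\ell\}$, together with $m+1$ disjoint groups $G_1,\dots,G_{m+1}$ of $m$ voters each, where $G_p$ approves only $f_p$. Setting $C = \{d_1,\dots,d_\ell, f_1,\dots,f_{m+1}\}$ and $k = \ell + m$ gives $n = (m+1)(\ell + m)$ and quota $n/k = m+1$, so $N^*$ is exactly an $\ell$-quota. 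Once again $|C| = k+1$, so up to symmetry the only committees to compare are $W_1 = \{d_1,\dots,d_\ell\} \cup \{f_1,\dots,f_m\}$ (which satisfies $\ell$-\jr) and $W_2 = \{d_1,\dots,d_{\ell-1}\} \cup \{f_1,\dots,f_{m+1}\}$ (which violates it). A computation yields the score of $W_2$ minus that of $W_1$ as $m - \ell(m+1)w_\ell$, which is positive whenever $m > \ell w_\ell/(1 - \ell w_\ell)$; such an $m$ exists since $\ell w_\ell < 1$. Therefore $\vecw$-\pav picks $W_2$, leaving every voter in $N^*$ with only $\ell - 1$ approved candidates in the committee and thereby violating $\ell$-\jr and hence \ejr.

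The main obstacle is Case 1: the naive single-block version of the construction only yields the weaker threshold $w_j > 1/(j-1)$, because fitting one block together with a quota-sized $N^*$ into an election of quota $s$ constrains the block to at most $(j-1)s$ voters. The remedy is to parallelize with $q$ blocks and let $q$ grow: the quota stays at $q$ while the ``extra'' marginal $M w_j = (qj-1)w_j$ from the last candidate of each block grows proportionally, and the critical ratio $q/(qj-1) \to 1/j$ as $q \to \infty$ matches the hypothesis $w_j > 1/j$ exactly. Verifying optimality of the selected committee over \emph{all} size-$k$ subsets is automatic in both constructions, because taking $|C| = k+1$ reduces the problem to a comparison between two structurally distinct candidate committees.
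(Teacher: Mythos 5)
Your proposal is correct and follows essentially the same strategy as the paper's proof: the same two-case split (some $w_j>\frac1j$ forces a violation of plain \jr; otherwise some $w_\ell<\frac1\ell$ forces a violation of $\ell$-\jr), the same device of taking $|C|=k+1$ so that only two symmetry classes of committees need comparing, and the same asymptotic scaling of a size parameter to push the critical ratio to $\frac1j$. The constructions differ from the paper's only in parametrization.
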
 

Theorem~\ref{thm:wpav-not-ejr} follows immediately from Lemmas~\ref{lem:wpav-not-ejr1}
and~\ref{lem:wpav-not-ejr2}, which are stated below.

\begin{lemma}\label{lem:wpav-not-ejr1}
Consider a weight vector $\vecw$ with $w_1=1$. 
If $w_j>\frac{1}{j}$ for some $j>1$, then $\vecw$-\pav fails \jr.
\end{lemma}
\begin{proof}
Suppose that $w_j=\frac{1}{j}+\eps$ for some $j>1$ and $\eps>0$.
Pick $k > \lceil\frac{1}{\eps j}\rceil+1$ so that $j$ divides $k$; let $t=\frac{k}{j}$.
Let $C=C_0\cup C_1\cup\dots\cup C_t$, where $C_0=\{c\}$,
$|C_1|=\dots=|C_t|=j$, and the sets $C_0,C_1, \dots, C_t$ are pairwise disjoint.
Note that $|C|=tj+1=k+1$. Also, construct $t+1$ pairwise disjoint groups of voters
$N_0, N_1, \dots, N_t$ so that $|N_0|=k$,
$|N_1|=\dots=|N_t|=j(k-1)$, and for each $i=0, 1,\dots,t$ the voters in $N_i$
approve the candidates in $C_i$ only.
Observe that the total number of voters is given by $n=k+tj(k-1)=k^2$.

We have $|N_0|=k=\frac{n}{k}$, so every committee that provides justified representation
for this profile must elect $c$. However, we claim that
$\vecw$-\pav elects all candidates in $C\setminus\{c\}$ instead.
Indeed, if we replace an arbitrary candidate in $C\setminus\{c\}$ with $c$,
then under $\vecw$-\pav the total score of our committee changes by
$$
k - j(k-1)\cdot\left(\frac{1}{j}+\eps\right) = 1 - j(k-1)\eps < 1-j\eps\left\lceil\frac{1}{\eps j}\right\rceil \le 0,
$$
i.e.~$C\setminus\{c\}$ has a strictly higher score than any committee that includes $c$.
\qed
\end{proof}

\begin{lemma}\label{lem:wpav-not-ejr2}
Consider a weight vector $\vecw$ with $w_1=1$. 
If $w_j<\frac{1}{j}$ for some $j>1$, then $\vecw$-\pav
fails $j$-\jr.
\end{lemma}
\begin{proof}
Suppose that $w_j=\frac{1}{j}-\eps$ for some $j>1$ and $\eps>0$.
Pick $k > j+\lceil\frac{1}{\eps}\rceil$.
Let $C=C_0\cup C_1$, where $|C_0|=j$, $C_1=\{c_1,\dots, c_{k-j+1}\}$
and $C_0\cap C_1 = \emptyset$.
Note that $|C|=k+1$. Also, construct $k-j+2$ pairwise disjoint groups of voters
$N_0, N_1, \dots, N_{k-j+1}$ so that $|N_0|=j(k-j+1)$,
$|N_1|=\dots=|N_{k-j+1}|=k-j$, the voters in $N_0$ approve the candidates in $C_0$ only,
and for each $i=1,\dots,k-j+1$ the voters in $N_i$ approve $c_i$ only.
Note that the number of voters is given by $n = j(k-j+1)+(k-j+1)(k-j)=k(k-j+1)$.

We have $\frac{n}{k}=k-j+1$ and $|N_0|=j\cdot \frac{n}{k}$, so every committee that
provides \ejr must select all candidates in $C_0$.
However, we claim that
$\vecw$-\pav elects all candidates from $C_1$ and $j-1$ candidates from $C_0$ instead.
Indeed, let $c$ be some candidate in $C_0$, let $c'$ be some candidate in $C_1$,
and let $W=C\setminus \{c\}$, $W'=C\setminus\{c'\}$.
The difference between the total score of $W$ and that of $W'$ is
$$
j(k-j+1)\left(\frac{1}{j}-\eps\right)-(k-j) < 1 - j\cdot \frac{1}{\eps} \cdot \eps < 1-j<0,
$$
i.e.~$\vecw$-\pav assigns a higher score to $W$. As this argument does not depend on the choice
of $c$ in $C_0$ and $c'$ in $C_1$, the proof is complete.
\qed
\end{proof}

\subsection{\jr, \ejr and Core Stability}\label{sec:core}
One can view (extended) justified representation as a stability condition,
by associating committees that provide \jr/\ejr with outcomes of a certain NTU game
that are resistant to certain types of deviations.

Specifically, given a pair $(\calA, k)$, where $\calA=(A_1,\dots, A_n)$,
we define an NTU game $\calG(\calA, k)$ with the set of players $N$ as follows. We assume that each coalition
of size $x$, $\ell\frac{n}{k}\le x < (\ell+1)\frac{n}{k}$, where $\ell\in\{1,\dots, k\}$, 
can ``purchase'' $\ell$ alternatives. Moreover, each player evaluates a 
committee of size $\ell$, $\ell\in\{1,\dots, k\}$, using the \pav utility function,
i.e.~$i$ derives a utility of $1+\frac12+\dots+\frac{1}{j}$ from a committee that contains
exactly $j$ of her approved alternatives (the argument goes through for $\vecw$-\pav
utilities, as long as $w_1\ge\dots\ge w_k>0$). 
Thus, for each coalition $S$ with $\ell\frac{n}{k}\le |S| < (\ell+1)\frac{n}{k}$ 
a payoff vector $\vecx\in{\mathbb R}^n$
is considered to be feasible for $S$ if and only if there exists a committee
$W\subseteq A$ with $|W|\le\ell$ such that $x_i=u_i(W)$ for each $i\in S$,
where $u_i(W)=1+\dots+\frac{1}{|A_i\cap W|}$. We denote the set of all payoff vectors
that are feasible for a coalition $S\subseteq N$ by $V(S)$.

We say that a coalition $S\subseteq N$ has a {\em profitable deviation} from a payoff vector 
$\vecx\in V(N)$ if there exists a payoff vector $\vecy\in V(S)$
such that $y_i>x_i$ for all $i\in S$. A payoff vector $\vecx$ is {\em stable}
if it is feasible for $N$ and no coalition $S\subseteq N$ has a profitable deviation
from it; the set of all stable payoff vectors is the {\em core} of $\calG(\calA, k)$.

The following theorem describes the relationship between \jr, \ejr,
and outcomes of $\calG(\calA, k)$.

\begin{theorem}\label{thm:core}
A committee $W$, $|W|=k$, provides justified representation for $(\calA, k)$
if and only if no coalition of size $\lceil \frac{n}{k}\rceil$ or less
has a profitable deviation from the payoff vector $\vecx$ associated with $W$.
Moreover, $W$ provides extended justified representation for $(\calA, k)$
if and only if for every $\ell\ge 0$ no coalition $N^*$ with 
$\ell\cdot\frac{n}{k}\le |N^*|<(\ell+1)\cdot\frac{n}{k}$, $|\cap_{i\in N^*}A_i|\ge \ell$
has a profitable deviation from $\vecx$.
\end{theorem}

\begin{proof}
Suppose that $W$ fails to provide justified representation for $(\calA, k)$,
i.e.~there exists a set of voters $N^*$, $|N^*|=\lceil\frac{n}{k}\rceil$,
who all approve some candidate $c\not\in W$, but none of them approves any of the candidates
in $W$. Then we have $x_i=0$ for each $i\in N^*$, and players in $N^*$
can successfully deviate: the payoff vector $\vecy$
that is associated with the committee $\{c\}$ is feasible for $N^*$
and satisfies $y_i=1$ for each $i\in N^*$.

Conversely, suppose that $W$ provides justified representation for $(\calA, k)$,
and consider a coalition $N^*$. If $|N^*|<\lceil\frac{n}{k}\rceil$,
then for every $\vecy\in V(N^*)$ we have $y_i=0$ for all $i\in N^*$,
so $N^*$ cannot profitably deviate.
On the other hand, if $|N^*|=\lceil\frac{n}{k}\rceil$, then every payoff 
vector $\vecy\in V(N^*)$ is associated with a committee of size $1$.
Hence, for every $\vecy\in V(N^*)$ we have $y_i\le 1$ for all $i\in N^*$,
and if $y_i=1$ for all $i\in N^*$, then $\cap_{i\in N^*} A_i\neq\emptyset$,
and therefore, since $W$ provides \jr, we have $x_i\ge1$ for some $i\in N^*$.

For \ejr the argument is similar. If $W$ fails to provide extended justified representation for $(\calA, k)$,
there exists an $\ell>0$ and a set of voters $N^*$, $|N^*|\ge\ell\cdot\frac{n}{k}$,
such that $|\bigcap_{i\in N^*}A_i|\ge \ell$, but $|A_i\cap W|<\ell$ for each $i\in N^*$.
Then we have $x_i<1+\dots+\frac{1}{\ell}$ for each $i\in N^*$, and players in $N^*$
can successfully deviate: if $S$ is a committee that consists of some $\ell$
candidates in $\bigcap_{i\in N^*}A_i$, then the payoff vector $\vecy$
that is associated with $S$ is feasible for $N^*$
and satisfies $y_i=1+\dots+\frac{1}{\ell}$ for each $i\in N^*$.

Conversely, suppose that $W$ provides extended justified representation for $(\calA, k)$,
and consider some $\ell\ge 0$ and some coalition $N^*$
with $\ell\cdot\frac{n}{k}\le |N^*|<(\ell+1)\frac{n}{k}$.
We have argued above that if $\ell=0$, then $N^*$ cannot profitably deviate. Thus, assume $\ell>0$.
Every payoff vector $\vecy\in V(N^*)$ is associated with a committee of size $\ell$.
Hence, for every $\vecy\in V(N^*)$ we have $y_i\le 1+\dots+\frac{1}{\ell}$ for all $i\in S$,
and if $y_i=1+\dots+\frac{1}{\ell}$ for all $i\in N^*$, then $|\cap_{i\in N^*} A_i|\ge\ell$.
Since $W$ provides \ejr, we have $x_i\ge 1+\dots+\frac{1}{\ell}$ for some $i\in N^*$.
\qed
\end{proof}

The second part of Theorem~\ref{thm:core} considers deviations by 
cohesive coalitions. The reader may wonder if it can be strengthened
to {\em arbitrary} coalitional deviations, 
i.e.~whether a committee provides \ejr \ {\em if and only if}
the associated payoff vector is in the core of $\calG(\calA, k)$.
The following example shows that this is not the case.

\begin{example}\label{ex:ejr-not-core}
Let $k=10$, $C=\{ x_1, x_2,  \ldots, x_{10}, y, z \}$, $N=\{1,2, \ldots, 20\}$, and 
\begin{align*}
	A_1=A_2=A_3 &= \{ x_1, y \},\\
	A_4=A_5=A_6 &= \{ x_1,z \},\\
	A_7=\ldots = A_{20} &= \{ x_2, \ldots, x_{10} \}.
\end{align*}
Then \pav outputs the committee $W = \{ x_1, x_2, \ldots, x_{10} \}$ for $(\calA,k)$; in particular, $W$ provides 
\ejr for $(\calA,k)$. However, the associated payoff vector $\vecx$ is not in the core, as the players in 
$\{1,2,3,4,5,6\}$, a coalition of size $3 \frac{n}{k}$, can successfully deviate: the payoff vector associated with 
$\{ x_1,y,z \}$ is feasible for $\{ 1,2,3,4,5,6 \}$ and provides a higher payoff than $\vecx$ to each of the first 
six players. We remark that the core of $\calG(\calA, k)$ is not empty: in particular, 
it contains the payoff vector associated with $\{x_1,\dots, x_8,y,z\}$.
\end{example}

It remains an open question whether the core of $\calG(\calA,k)$ is non-empty for every pair $(\calA,k)$. 
Further, while it would be desirable to have a voting rule that outputs a committee 
whose associated payoff vector is in the core whenever the core is not empty, we are not aware of any such rule:
every voting rule that fails \ejr also fails this more demanding criterion, and
Example~\ref{ex:ejr-not-core} illustrates that \pav fails this criterion as well.

%


\subsection{Computational Issues}
In Section~\ref{sec:jr} we have argued that it is easy to find a committee 
that provides \jr for a given ballot profile, 
and to check whether a specific committee provides \jr. In contrast, for \ejr these questions appear to be computationally 
difficult. Specifically, we were unable to design an efficient algorithm for computing a committee 
that provides \ejr; while \pav is guaranteed to find such a committee, computing its output is NP-hard. 
We remark, however, that when $\ell$ is bounded by a constant, we can efficiently compute a committee that provides
$\ell$-\jr, i.e.~the challenge is in handling large values of $\ell$. 

\begin{theorem}
A committee satisfying $\ell$-\jr can be computed in time polynomial in $n$ and $|C|^\ell$.
\end{theorem}
\begin{proof}
Consider the following greedy algorithm, which we will refer to as $\ell$-\grav. We start by setting
$C'=C$, $\calA'=\calA$, and $W=\emptyset$. As long as $|W| \le k - \ell$,
we check if there exists a set of candidates $\{c_1, \ldots, c_\ell \} \subset C'$ that is
unanimously approved by at least $\ell \frac{n}{k}$ voters in $\calA'$ (this can be done in time $n\cdot|C|^{\ell+1}$).
If such a set exists, we set $W:=W\cup\{c_1, \ldots, c_\ell \}$ and we remove from $\calA'$ all ballots $A_i$
such that $|A_i \cap W| \ge \ell$ (note that this includes all ballots $A_i$ with $\{c_1, \ldots c_\ell \} \subseteq A_i$).
If at some point we have $|W| \le k - \ell$ and there is no $\{ c_1, \ldots, c_\ell \}$
that satisfy our criterion or $|W| > k-\ell$, we add an arbitrary $k-|W|$
candidates from $C'$ to $W$ and return $W$; if this does not happen,
we terminate after having picked $k$ candidates.

Suppose for the sake of contradiction that for some profile $\calA=(A_1, \dots, A_n)$ and some $k>0$,
$\ell$-\grav outputs a committee that does not provide $\ell$-\jr for $(\calA, k)$.
Then there exists a set $N^*\subseteq N$ with $|N^*|\ge \ell \frac{n}{k}$ such that $|\bigcap_{i\in N^*} A_i| \ge \ell$
and, when $\ell$-\grav terminates, every ballot $A_i$ such that $i\in N^*$ is still in $\calA'$.
Consider some subset of candidates $\{c_1, \ldots, c_\ell \} \subseteq \bigcap_{i\in N^*} A_i$.
At every point in the execution of $\ell$-\grav this subset is unanimously approved by at least
$|N^*|\ge \ell \frac{n}{k}$ ballots in $\calA'$. As at least one of $\{ c_1, \ldots, c_\ell \}$
was not elected, at every stage the algorithm selected
a set of $\ell$ candidates that was approved by at least $\ell \frac{n}{k}$ ballots
(until more than $k-\ell$ candidates were selected). Since at the end of each stage
the algorithm removed from $\calA'$ all ballots containing the candidates that had been added to $W$ at that stage,
it follows that altogether the algorithm has removed at least
$\lfloor \frac{k}{\ell} \rfloor \cdot \ell \frac{n}{k} >
(\frac{k}{\ell}-1) \cdot \ell \frac{n}{k} = n-\ell\frac{n}{k}$ ballots from $\calA'$.
This is a contradiction, since we assumed that, when the algorithm terminates,
the $\ell \frac{n}{k}$ ballots $(A_i)_{i\in N^*}$
are still in $\calA'$.
\end{proof}

For the problem of checking
whether a given committee provides \ejr for a given input, we can establish a formal hardness result.

\begin{theorem}\label{th:ejr-hard}
Given a ballot profile $\calA$, a target committee size $k$, and a committee $W$, $|W|=k$,
it is {\em coNP}-complete to check whether $W$ provides \ejr for $(\calA, k)$.
\end{theorem}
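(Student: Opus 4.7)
The plan is to split the argument into the standard two parts, showing first that the problem lies in coNP and then that it is coNP-hard. For membership, I would observe that the complement, ``does $W$ fail to provide \ejr?'', is in NP: a certificate is an integer $\ell \in \{1, \dots, k\}$ together with a subset $N^* \subseteq N$, and a polynomial-time verifier checks that $|N^*| \ge \ell n / k$, that $|\bigcap_{i \in N^*} A_i| \ge \ell$, and that $|A_i \cap W| < \ell$ for every $i \in N^*$. Any such witness certifies that $W$ fails $\ell$-\jr, and hence fails \ejr.

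For hardness, I would reduce from the NP-complete BALANCED BICLIQUE problem: given a bipartite graph $G = (U \cup V, E)$ with $|U| = |V| = m$ and an integer $t \ge 2$, decide whether $G$ contains a complete bipartite subgraph $K_{t,t}$. From such an instance I would construct an EJR-checking instance as follows. Set the committee size $k := (t-1) m$. Take a candidate set $C = V \cup W$ with $V \cap W = \emptyset$ and $|W| = k$, and designate $W$ itself as the committee to be checked. Introduce $m$ ``real'' voters indexed by $U$ together with $(t-2) m$ dummy voters with empty ballots, so that $n = (t-1) m = k$ and therefore $\ell n / k = \ell$ for every $\ell$. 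Each real voter $u$ submits $A_u = N_G(u) \cup S_u$, where the private blocks $S_u$ are pairwise disjoint $(t-1)$-subsets of $W$ that together partition $W$.

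To verify correctness I would argue in both directions. For $\ell \le t - 1$, every real voter has $|A_u \cap W| = t - 1 \ge \ell$, so no real voter may belong to a witnessing $N^*$; dummy voters have empty ballots and so cannot contribute to $\bigcap_{i \in N^*} A_i$; thus no violation is possible at such~$\ell$. For $\ell \ge t$, any witnessing $N^*$ contains at least two real voters, so the pairwise-disjoint private blocks force $\bigcap_{i \in N^*} A_i \subseteq V$, meaning that $N^*$ together with any $\ell$ common candidates forms a $K_{\ell, \ell}$ in $G$, which contains a $K_{t, t}$. Conversely, given a biclique $K_{t,t}$ on $U_0 \cup V_0$ in $G$, the set $N^* = U_0$ witnesses failure of $t$-\jr: $|N^*| = t = \ell n / k$, $V_0 \subseteq \bigcap_{i \in N^*} A_i$, and $|A_u \cap W| = t - 1 < t = \ell$. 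Hence $W$ fails \ejr iff $G$ contains $K_{t, t}$.

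The main obstacle is ensuring that \ejr can fail only at the intended threshold $\ell = t$, so that no spurious violation at some other $\ell$ muddies the reduction. The private-block construction does the heavy lifting here: choosing blocks of size exactly $t-1$ makes $|A_u \cap W|$ large enough to veto every real voter at small $\ell$, while making the blocks pairwise disjoint trivializes $W$-intersections and confines any large-$\ell$ common support to $V$, directly linking it to a biclique in $G$. The empty-ballot dummies serve only to pad $n$ to equal $k$, so that the arithmetic $\ell n / k$ simplifies to $\ell$ and the threshold calibrates exactly to $t$.
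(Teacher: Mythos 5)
Your proof is correct, and while it follows the same high-level template as the paper (coNP membership via the obvious certificate, hardness via a reduction from {\sc Balanced Biclique}), the reduction itself is genuinely different and in some respects cleaner. The paper sets $k=2\ell-2$ (where $\ell$ is the biclique parameter), builds the committee from two disjoint blocks $C_1,C_1'$ of size $\ell-1$, gives the ``graph'' voters $N_0$ the ballots $N_G(v_i)\cup C_1$ and a second voter group $N_1$ the ballots $C_0\cup C_1'$, and pads with a third group of singleton voters so that $n/k=s$; the case analysis then has to rule out coalitions mixing $N_0$ and $N_1$. Your construction instead pads $n$ up to exactly $k$ with empty-ballot dummies so that the threshold $\ell\cdot n/k$ collapses to $\ell$, and pins every real voter's representation at exactly $t-1$ via pairwise disjoint private blocks partitioning $W$; this makes the two directions of the correctness argument essentially immediate (small $\ell$ is killed because every real voter already has $t-1\ge\ell$ representatives, and large $\ell$ forces the common approval set into $V$, yielding the biclique). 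The one point worth flagging is your reliance on empty approval ballots: the paper's definition of a ballot as an arbitrary subset $A_i\subseteq C$ permits them, and they are doing real work for you (a dummy with any nonempty private ballot outside $W$ would itself create a spurious $1$-\jr violation), but if one insisted on nonempty ballots you could instead give all dummies the ballot $\{w\}$ for a fixed $w\in W$, which preserves the argument. Minor bookkeeping such as $t\le m$ (so that $\ell=t\le k$ is an admissible level in the \ejr definition) is standard for {\sc Balanced Biclique} and harmless.
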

\begin{proof}
It is easy to see that this problem is in coNP: 
to show that $W$ does not provide \ejr for $(\calA, k)$, 
it suffices to guess an integer $\ell$ and a set of voters $N^*$ of size at least $\ell\cdot \frac{n}{k}$ 
such that $|\bigcap_{i\in N^*}A_i|\ge \ell$, but $|A_i\cap W|<\ell$ for all $i\in N^*$.

To prove coNP-completeness, we reduce the classic {\sc Balanced Biclique} problem \citep[\mbox{[GT24]}]{gj}
to the complement of our problem. An instance of {\sc Balanced Biclique} is given by a bipartite
graph $(L, R, E)$ with parts $L$ and $R$ and edge set $E$, and an integer $\ell$; it is a ``yes''-instance
if we can pick subsets of vertices $L'\subseteq L$ and $R'\subseteq R$ so that $|L'|=|R'|=\ell$
and $(u, v)\in E$ for each $u\in L', v\in R'$; otherwise, it is a ``no''-instance.

Given an instance $\langle (L, R, E), \ell\rangle$ of {\sc Balanced Biclique} with $R=\{v_1, \dots, v_s\}$, 
we create an instance of our problem as follows. Assume without loss of generality that $s\ge 3$, $\ell\ge 3$.
We construct $4$ pairwise disjoint sets of candidates $C_0$, $C_1$, $C'_1$, $C_2$, 
so that $C_0=L$, $|C_1|=|C'_1|=\ell-1$, $|C_2|=s\ell+\ell-3s$, and set $C=C_0\cup C_1\cup C'_1\cup C_2$.
We then construct $3$ sets of voters $N_0$, $N_1$, $N_2$, so that $N_0=\{1, \dots, s\}$, 
$|N_1|=\ell(s-1)$, $|N_2|=s\ell+\ell-3s$ (note that $|N_2|>0$ as we assume that $\ell\ge 3$).
For each $i\in N_0$ we set $A_i=\{u_j\mid (u_j, v_i)\in E\}\cup C_1$, 
and for each $i\in N_1$ we set $A_i=C_0\cup C'_1$. The candidates in $C_2$
are matched to voters in $N_2$: each voter in $N_2$ approves exactly one candidate in $C_2$, 
and each candidate in $C_2$ is approved by exactly one voter in $N_2$.
Denote the resulting list of ballots by $\calA$. 
Finally, we set $k=2\ell-2$, and let $W=C_1\cup C'_1$.
Note that the number of voters $n$ is given by $s+\ell(s-1)+s\ell+\ell-3s=2s(\ell-1)$, so $\frac{n}{k}=s$.

Suppose first that we started with a ``yes''-instance of {\sc Balanced Biclique}, and let $(L', R')$
be the respective $\ell$-by-$\ell$ biclique. Let $C^*=L'$, $N^*=N_0\cup N_1$.
Then $|N^*|=\ell s$, all voters in $N^*$ approve all candidates in $C^*$, $|C^*|=\ell$,
but each voter in $N^*$ is only represented by $\ell-1$ candidates in $W$. Hence, $W$
fails to provide $\ell$-justified representation for $(\calA, k)$.

Conversely, suppose that $W$ fails to provide \ejr for $(\calA, k)$. That is, there exists
a value $j>0$, a set $N^*$ of $js$ voters and a set $C^*$ of $j$ candidates so that all voters
in $N^*$ approve of all candidates in $C^*$, but for each voter in $N^*$
at most $j$ of her approved candidates are in $W$. Note that, since $s>1$, we have $N^*\cap N_2=\emptyset$.
Further, each voter in $N\setminus N_2$ is represented by $\ell-1$ candidates in $W$,
so $j\ge \ell$. As $N^*=js\ge \ell s\ge s$, it follows that $|N^*\cap N_0|\ge \ell$, $|N^*\cap N_1|>0$.
Since $N^*$ contains voters from both $N_0$ and $N_1$, it follows that $C^*\subseteq C_0$.
Thus, there are at least $\ell$ voters in $N^*\cap N_0$ who approve the same $j\ge \ell$ candidates
in $C_0$; any set of $\ell$ such voters and $\ell$ such candidates corresponds to an $\ell$-by-$\ell$ biclique
in the input graph.
\qed
\end{proof}


\section{Variants of Justified Representation}\label{sec:scr}

The definition of \jr requires that if there is a group of $\lceil\frac{n}{k}\rceil$
voters who jointly approve some candidate, then the elected committee has to contain
at least one candidate approved by {\em some} member of this group. This condition
may appear to be too weak; it may seem more natural to require that \emph{every} group member 
approves some candidate in the committee, or---stronger yet---that the committee contains
at least one candidate approved by {\em all} group members. This intuition is captured by the following definitions.

\begin{definition}
Given a ballot profile $(A_1, \dots, A_n)$ and a target committee size $k$,  
we say that a committee $W$ of size $k$ provides
\begin{itemize}
	\item {\em semi-strong justified representation for $(\calA, k)$} if for each group $N^*\subseteq N$ 
        with $|N^*|\ge \frac{n}{k}$ and $\bigcap_{i\in N^*}A_i\neq \emptyset$ it holds that 
        $W\cap A_i \neq \emptyset$ for all $i \in N^*$.

	\item {\em strong justified representation for $(\calA, k)$} if for each group $N^*\subseteq N$ 
        with $|N^*|\ge \frac{n}{k}$ and $\bigcap_{i\in N^*}A_i\neq \emptyset$
        it holds that $W \cap \left( \cap_{i \in N^*} A_i \right) \neq \emptyset$.
\end{itemize}
\end{definition}

By definition, a committee providing strong justified representation 
also provides semi-strong justified representation, and a committee providing  
semi-strong justified representation also provides (standard) 
justified representation.

However, it turns out that satisfying these stronger requirements is not always feasible: 
there are ballot profiles for which no committee provides semi-strong justified representation.

\begin{example}\label{ex:strong}
	Let $k=3$ and consider the following profile with $n=9$ and $C=\set{a,b,c,d}$.
	\begin{align*}
		&A_1 = A_2 = \set{a} &&A_3=\set{a,b} &&A_4 = \{b\} &&A_5=\set{b,c}\\
		&A_6 = \{c\}         &&A_7=\set{c,d} &&A_8 = A_9 = \set{d}&
	\end{align*}
For each candidate $x \in C$, there are $\frac{n}{k}=3$ voters such that $\cap_i A_i = \set{x}$, 
and at least one of those voters has $A_i = \set{x}$. Thus, a committee that satisfies
semi-strong justified representation would have to contain all four candidates, which is impossible. 
\end{example}
%
%
While Example~\ref{ex:strong} shows that no approval-based voting rule can always
find a committee that provides strong or semi-strong justified representation, 
it may be interesting to identify voting rules that output such committees whenever they exist.

Finally, we remark that strong justified representation does not imply \ejr. 

\begin{example}\label{ex:sjr-not-ejr}
Let $C=\{a, b, c, d, e\}$, $n=4$, $k=4$, and consider the following ballot profile.
\begin{align*}
A_1: \{a,b\}&&
A_2: \{a,b\}&&
A_3: \{c\}&&
A_4: \{d, e\}
\end{align*}
\ejr requires that we choose both $a$ and $b$, but $\{a, c, d, e\}$ provides 
strong justified representation.
\end{example}

\section{Related Work}\label{sec:related}
It is instructive to compare \jr and \ejr to alternative approaches towards fair representation,
such as 
{\em representativeness}
\citep{Dud14a} and {\em proportional justified representation} \citep{SFFB16a}.


\citet{Dud14a} proposes the notion of {\em representativeness}, which applies to probabilistic voting rules.
The property Duddy proposes is incomparable with \jr:
in situations he considers ($k=2$, $n$ voters approve $x$, $n+1$ voters
approve $y$ and $z$), \jr requires that one of $y$ and $z$ should be selected, whereas Duddy
requires $x$ to be selected with positive probability. Both are
reasonable requirements, but they address different concerns. Duddy's
axiom say nothing about situations where voters are split equally
(say, $n$ voters approve $\{x,y\}$, $n$ voters approve $\{z, t\}$), whereas \jr
requires that each voter is represented. Another obvious difference
is that he allows for randomized rules.

Very recently (after the conference version of our paper was published), 
\citet{SFFB16a} came up with the notion of \emph{proportional 
justified representation (\pjr)}, which can be seen as an alternative to \ejr.
A committee is said to provide \pjr for a ballot profile $(A_1, \dots, A_n)$ over a candidate set $C$
and a target committee size $k$ if, for
every positive integer $\ell$, $\ell\le k$, there does not exist a set of voters $N^*\subseteq N$
with $|N^*|\ge \ell\cdot \frac{n}{k}$ such that $|\bigcap_{i\in
N^*}A_i|\ge \ell$, but $|(\bigcup_{i \in N^*} A_i) \cap W|<\ell$.
In contrast to $\ejr$, the \pjr condition does not require one of the voters in
$N^*$ to have $\ell$ representatives. Rather, a committee provides $\pjr$ as long as it contains $\ell$ candidates that are
approved by (possibly different) voters in $N^*$, for every group $N^*$ satisfying the size and cohesiveness constraints.
An attractive feature of \pjr is that it is compatible with the idea of {\em perfect 
representation}: a committee $W$ provides perfect representation for a group of $n$ voters
and a target committee size $k$ if $n=ks$ for some positive integer $s$ 
and the voters can be split into $k$ pairwise disjoint groups $N_1, \dots, N_k$ of size $s$ each
in such a way that there is a one-to-one mapping $\mu:W\to \{N_1, \dots, N_k\}$
such that for each candidate $a\in W$ all voters in $\mu(a)$ approve $a$.
\citet{SFFB16a} prove that every committee that provides perfect representation also provides \pjr;
in contrast, \ejr may rule out all committees that provide perfect representation, 
as illustrated by Example~\ref{ex:monav}.
It is easily seen that \pjr is a weaker requirement than \ejr, and a stronger one than \jr. 
Interestingly, \citet{SFFB16a} show that many results that we have established for \ejr also hold
for \pjr: in particular, $\vecw$-\rav violates \pjr for every weight
vector $\vecw$, and $\vecw$-\pav satisfies $\pjr$ if and only if $\vecw = (1, \frac12, \frac13, \dots)$. 

\section{Conclusions}\label{sec:concl}

		\begin{table}[h!]
				\centering
					\scalebox{1}{
			\centering
			\begin{tabular}{llll}
			\toprule
				&\jr&\ejr&Complexity\\
					Rule&&\\
		\midrule
		\av&--&--&in P\\
		\sav &--&--&in P\\

		\mav&--&--&NP-hard\\
				\rav&--&--&in P\\
		\grav&\checkmark&--&in P\\
        $\grav^T$&\checkmark&--${}^*$&in P\\
		 \monav&\checkmark&--&NP-hard\\
		 		\pav &\checkmark&\checkmark&NP-hard\\
				\bottomrule
			\end{tabular}
			}
		\caption{Satisfaction of \jr and \ejr and complexity of approval-based voting rules;
			the superscript `$*$' indicates that the rule fails the respective axiom for some way of breaking intermediate ties.
			}
			\label{table:summary:av-rep}
			\end{table}

\noindent We have formulated a desirable property of approval-based committee selection rules,
which we called justified representation (\jr). 
While \jr is fairly easy to satisfy,
it turns out that many well-known approval-based rules fail it. A prominent exception is the \pav
rule, which also satisfies a stronger version of this property, namely extended justified
representation (\ejr). Indeed, \ejr characterizes \pav within the class of $\vecw$-\pav rules,
and we are not aware of any other natural voting rule that satisfies \ejr irrespective of the tie-breaking rule
(of course, we can construct voting rules that differ from \pav, yet satisfy \ejr,
by modifying the output of \pav on profiles on which \ejr places no constraints on the output).
Perhaps the most pressing open question suggested by our work is whether
there is an efficient algorithm for finding a committee that provides \ejr for a given profile.
In particular, we would like to understand whether we can break ties in the execution of $\grav^T$
to always produce such a committee, and whether some tie-breaking rule
with this property is polynomial-time computable.    
Also, it would be interesting to see if \ejr, in combination with other natural axioms, can be used to axiomatize \pav.
Concerning \semimix, an interesting algorithmic problem is whether there are efficient algorithm 
for checking the existence of committees satisfying these requirements. 

Justified representation can also be used to formulate new approval-based rules. 
We mention two rules that seem particularly attractive:
\begin{quote}
The \emph{utilitarian $\mathit{(E)JR}$ rule} returns a committee that, among all committees that satisfy $\mathit{(E)JR}$, has the highest \av score.  \\ 
The \emph{egalitarian $\mathit{(E)JR}$ rule} returns a committee that, among all committees that satisfy $\mathit{(E)JR}$, 
 maximizes the number of representatives of the voter who has the least number of representatives in the winning committee. 
\end{quote}	
The computational complexity of winner determination for these rules 
is an interesting problem. 

Since \pav is NP-hard to compute, our study also provides additional motivation
for the use of approximation and parameterized algorithms to compute \pav outcomes.
Finally, analyzing the compatibility of \jr with 
other important properties, such as, \eg strategyproofness for dichotomous preferences, 
is another avenue of future research. 
	

\begin{acknowledgements}
The authors thank the anonymous reviewers of 
\emph{Multidisciplinary Workshop on Advances in Preference Handling (MPREF 2014)}, 
and the  \emph{Twenty-Ninth AAAI Conference (AAAI 2015)} for their helpful feedback on earlier versions of the paper. 
We further thank Martin Lackner and Piotr Skowron for valuable discussions.	
		
Brill, Conitzer, and Freeman were supported by NSF and ARO under
grants CCF-1101659, IIS-0953756, CCF-1337215, W911NF-12-1-0550, and
W911NF-11-1-0332, by a Feodor Lynen research fellowship of the
Alexander von Humboldt Foundation, and by COST Action IC1205 on Computational Social Choice. 
Brill and Elkind were partially supported by ERC-StG 639945.
Walsh also receives support from the Asian Office of Aerospace Research and Development (AOARD
124056) and the German Federal Ministry for Education and Research through the Alexander
von Humboldt Foundation.	
\end{acknowledgements}	

\normalsize

%

\bibliographystyle{plainnat}

\end{document}